\newenvironment{proof}{\noindent {\bf Proof.}}{\bigskip}
  \newtheorem{lemma}{\textbf{Lemma}}[section]%
  \newtheorem{theorem}[lemma]{\textbf{Theorem}}%
\begin{document}

\centerline{\Large\bf Bidirected minimum Manhattan network problem}


\vspace{5mm}

\centerline{\large {\sc N. Catusse, V. Chepoi, K. Nouioua, and Y. Vax\`es}}

\vspace{3mm}

\begin{center}
Laboratoire d'Informatique Fondamentale de Marseille,\\[0.1cm]
Facult\'{e} des Sciences de Luminy, Aix-Marseille Universit\'e,\\[0.1cm]
 F-13288 Marseille Cedex 9, France\\[0.1cm]
{\em\{{catusse,chepoi,nouioua,vaxes\}@lif.univ-mrs.fr}}
\end{center}

\vspace{3mm}

\begin{abstract}
In the {\it bidirected minimum Manhattan network problem}, given  a set $T$ of $n$ terminals in the
plane, we need to construct a network $N(T)$ of minimum total length with the property that the
edges of $N(T)$ are axis-parallel and oriented in a such a way that
every ordered pair  of terminals is connected in $N(T)$ by a directed Manhattan path.
In this paper, we present a polynomial factor 2 approximation algorithm for
the bidirected minimum Manhattan network problem.
\end{abstract}


 \section{Introduction}


 A {\it rectilinear network} $N=(V,E)$  in ${\mathbb R}^2$ consists of a
 finite set $V$ of points and horizontal and vertical
 segments connecting pairs of points of $V.$ The {\it length} of $N$
 is the sum of lengths of its edges. Given a finite set $T$
 of terminals in  ${\mathbb R}^2$, a {\it Manhattan network}
 \cite{GuLeNa} on $T$ is a rectilinear network $N(T)=(V,E)$ such
 that $T\subseteq V$ and for every pair of points in $T,$ the
 network $N(T)$ contains a shortest rectilinear (i.e., Manhattan or $l_1$-) path
 between them. A {\it minimum Manhattan network} on $T$ is a Manhattan network of minimum
 possible length and the Minimum Manhattan Network problem ({\it MMN problem})
 is to find such a network (for an illustration, see Fig.~1). Note that there is always a minimum Manhattan network lying
 in the grid $\Gamma(T)$ defined by the terminals (consisting of all line segments that result
 from intersecting each horizontal and vertical lines through a point in $T$).

In this paper, we consider the following oriented version  of the MMN problem.
In the {\it Bidirected Minimum Manhattan Network problem} (which we abbreviate {\it BDMMN problem}), given a set $T$
 of terminals in the $l_1$-plane,  we want to select a minimum-length subset $N(T)$ of
edges in the grid $\Gamma(T)$ and to orient each edge in $N(T)$
such that each ordered pair of terminals is connected by a directed Manhattan path (for an illustration, see Fig.~2). This oriented version of the minimum Manhattan network problem
was formulated in \cite{GuKlNaSmWo} by M. Benkert and the second author of this note. Further we will assume that $T$ does not contain two terminals
on the same horizontal or vertical line, otherwise the problem does not
have a solution.

The minimum Manhattan network problem has been introduced  by
Gudmundsson, Levcopoulos, and Narasimhan  \cite{GuLeNa}.  Gudmundsson
et al. \cite{GuLeNa}
proposed an $O(n^3)$-time 4-approximation algorithm, and an
$O(n\log{n})$-time 8-approximation algorithm. They also conjectured
that there exists a 2-approximation algorithm and asked if MMN is NP-complete.
Chin, Guo, and Sun \cite{ChGuSu} recently  established that the problem is
indeed NP-complete. Kato, Imai, and Asano \cite{KaImAs} presented
a 2-approximation algorithm, however, their correctness proof is
incomplete (see \cite{BeWoWiSh}). Benkert,
Wolff, Shirabe, and  Widmann \cite{BeWoWiSh} described  an
$O(n\log{n})$-time 3-approximation algorithm and presented a
mixed-integer programming formulation of problem.  Nouioua
\cite{Nou} and Fuchs and Schulze \cite{FuSch} presented two
simple $O(n\log{n})$-time 3-approximation algorithms. The first
correct 2-approximation algorithm (solving the first open
question from \cite{GuLeNa}) was presented by Chepoi, Nouioua, and
Vax\`es \cite{ChNouVa} and is based on a strip-staircase
decomposition and a rounding method applied to the linear program from
\cite{BeWoWiSh}.  In his PhD thesis, Nouioua \cite{Nou}
described an $O(n\log{n})$-time 2-approximation algorithm based on
the primal-dual method.  In 2008, Guo, Sun, and Zhu
\cite{GuSuZh1,GuSuZh} presented two combinatorial factor 2
approximation algorithms with complexity $O(n^2)$ and $O(n\log{n})$ (see also
the PhD thesis \cite{Sch} of Schulze for yet another $O(n\log{n})$-time 2-approximation algorithm).
Seibert and Unger \cite{SeUn} announced a 1.5-approximation
algorithm, however the conference format of their paper does not
permit to understand the description of the algorithm and to check
its claimed performance guarantee (a counterexample that an
important intermediate step is incorrect was
given in \cite{FuSch,Sch}). Finally, a factor 2.5 approximation
algorithm for MMN problem in normed planes with polygonal unit balls was proposed in
\cite{CaChNoVa_normed}.

\begin{figure}
\begin{minipage}[b]{.45\linewidth}
\centering \includegraphics[scale=0.8]{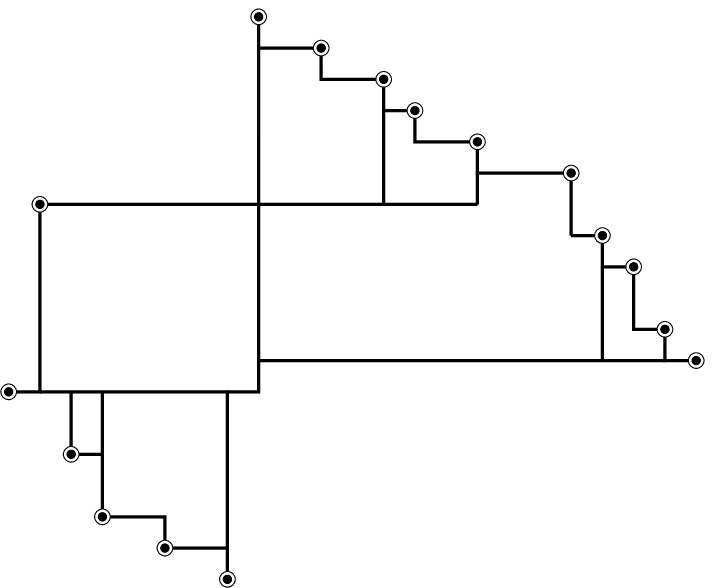}
\caption{\small A minimum Manhattan network\label{man}}
\end{minipage}\hfill\hfill
\begin{minipage}[b]{.55\linewidth}
\centering \includegraphics[scale=0.8]{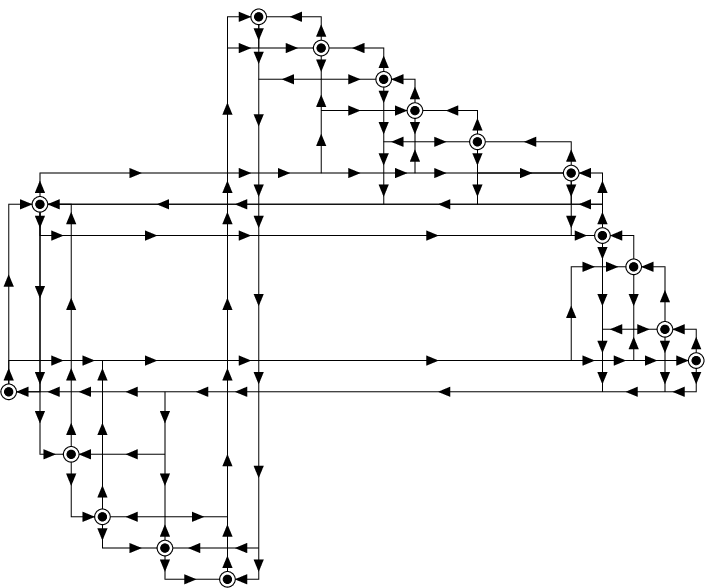}
\caption{\small A minimum bidirected Manhattan network\label{biman}}
\end{minipage}
\end{figure}



Gudmundsson et al. \cite{GuLeNa} introduced the MMN problem in connection with
geometric spanners.  A geometric network $N$ is a {\it $c$-spanner} ($c\ge 1$)
for $T$ if for each pair  ${\bf t}_i,{\bf t}_j\in T,$ there
exists a $({\bf t}_i,{\bf t}_j)$-path in $N$ of length at most $c\cdot \|{\bf t}_i-{\bf t}_j\|.$
In the Euclidean plane,   the unique $1$-spanner of $T$ is the
complete graph on $T.$
In the rectilinear plane, the points are connected by several Manhattan
paths, and the problem of finding the sparsest $1$-spanner
becomes non trivial. Minimum Manhattan networks are precisely the
optimal $1$-spanners. Analogously, the bidirected minimum Manhattan networks can be viewed as  optimal $1$-spanners of the directed grid $\overleftrightarrow{\Gamma}(T)$ obtained from $\Gamma(T)$ by replacing each edge by two opposite directed arcs.
Alternatively, bidirected Manhattan networks are roundtrip $1$-spanners sensu \cite{RoThZw} for the grid $\overleftrightarrow{\Gamma}(T)$ and for the set $T$ of terminals. In both reformulations of bidirected Manhattan networks as directed $1$-spanners of  $\overleftrightarrow{\Gamma}(T)$ we suppose that, like in Manhattan Street Networks \cite{ChAg, Ma, Va}, an edge of $\Gamma(T)$ participating in the resulting spanner can be directed only in one sense.
Geometric spanners have applications in network and VLSI circuit design,
distributed algorithms, and other areas \cite{Epp,NaSm}. Lam,
Alexandersson, and Pachter \cite{LaAlPa} used minimum
Manhattan networks to design efficient search spaces for pair hidden
Markov model alignment algorithms.

In this paper, we propose a factor 2 approximation algorithm for the minimum bidirected
Manhattan network problems. We also solve in the negative Problem 6 from the collection \cite{GuKlNaSmWo} asking
whether a specially constructed bidirected Manhattan network $N_{\varnothing}(T)$ is always optimal.

Our algorithm employs  the strip-staircase decomposition
proposed in our previous paper \cite{ChNouVa} and subsequently used in other approximation algorithms for MMN. First we notice
that each strip, oriented clockwise or counterclockwise, belongs to any bidirected Manhattan network. Then we show that all strips
constituting larger sets, called blocks,  have the same orientation. Since the strips from different blocks do not overlap, one can suppose that the algorithm orients the strips in
the same way as in an optimal solution. Since the bases of a staircase and the strips touching it belong to a common block, it remains to construct in each staircase
a completion of the
strip's solution. Any optimal completion can be subdivided into two subnetworks which, forgetting the orientation, can be viewed as variants of the MMN problem for
staircases. Such optimal (undirected) networks can be computed in polynomial time by dynamic programming. The algorithm then constructs a directed version of these
networks by directing their edges and replacing some edges by two shifted oriented copies.

We conclude this section with some notations. For a point $p \in \mathbb{R}^2$ we will denote by $p^x$ and $p^y$ its two coordinates. For two points $p,q$ of $\mathbb{R}^2$ we will denote by $R(p,q)$ the smallest axis-parallel rectangle containing
$p$ and $q$ (clearly, $p$ and $q$ are two
opposite corners of $R(p,q)$).
Let $T=\{ {\bf t}_1,\ldots,{\bf t}_n\}$ denote the set of
$n$ terminals (recall that $T$ does not contain two terminals on the same vertical or horizontal line).
For two terminals ${\bf t}_i,{\bf t}_j\in T$
we will set $R_{i,j} =R(\bf{t}_i,\bf{t}_j).$
We will say that the rectangle $R_{i,j}$ is {\it empty} if $R_{i,j}\cap T=\{ {\bf t}_i,{\bf t}_j\}.$
Finaly, let $F_{\varnothing}$ be the set of all ordered pairs $(i,j)$ such that $R_{i,j}$ is
empty.

\section{The counterexample}

We start with an example showing that the bidirected network
$N_{\varnothing}(T)$ defined in \cite{GuKlNaSmWo} is not optimal. 
$N_{\varnothing}(T)$ is defined in the following way: go through all empty
rectangles $R_{i,j}$ and orient the edges on the boundary
of $R_{i,j}$ clockwise if the line segment ${\bf t}_i{\bf t}_j$ has
positive slope and counterclockwise if ${\bf t}_i{\bf t}_j$ has negative slope.
$N_{\varnothing}(T)$ is always a bidirected Manhattan network.  In Fig.
\ref{ct_simple} we present an optimal bidirected Manhattan network (its length
is 32) for a set of 5 terminals. For the same set of terminals, the length of the bidirected Manhattan network
$N_{\varnothing}(T)$ presented in Fig. \ref{ct_simple_algo}
is 34:  $N_{\varnothing}(T)$ also includes the two dotted edges of the staircase not
included in the optimal solution. Analogous larger examples show that the ratio between the
length of $N_{\varnothing}(T)$ and the optimum can be arbitrarily large.

\begin{figure}[h]
\begin{minipage}[b]{.45\linewidth}
\centering \includegraphics[width=4cm]{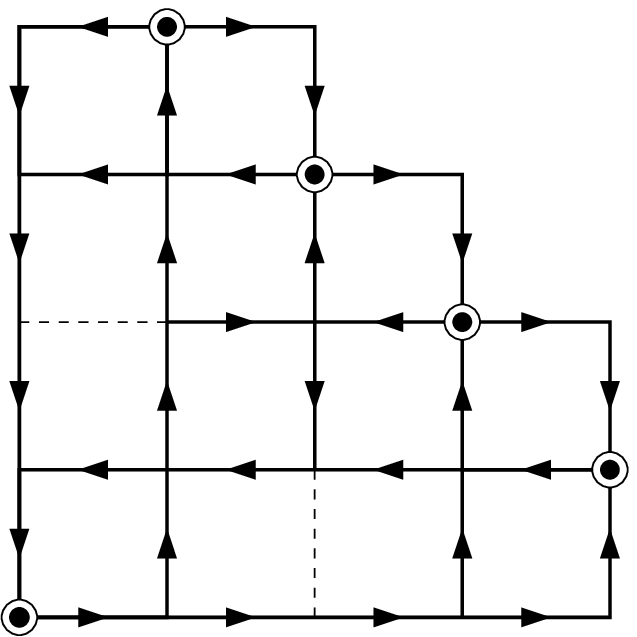}
\caption{An optimal bidirected Manhattan network\label{ct_simple}}
\end{minipage}\hfill
\begin{minipage}[b]{.45\linewidth}
\centering \includegraphics[width=4cm]{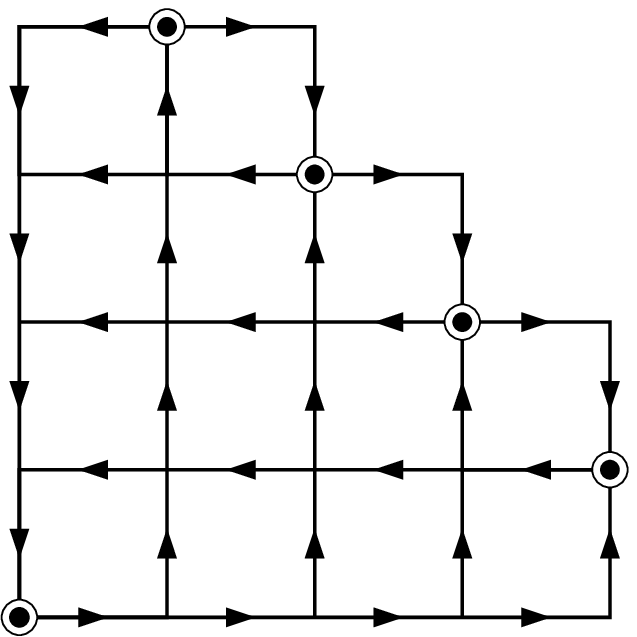}
\caption{The bidirected Manhattan network $N_{\varnothing}(T)$\label{ct_simple_algo}}
\end{minipage}
\end{figure}

\section{Strips and staircases}
In this section, we briefly recall the notions of strips and staircases defined and studied in \cite{ChNouVa}; we refer to this paper
for proofs and some missing details.
An empty
rectangle $R_{i,j}$ is called a {\it vertical strip} if the
$x$-coordinates of ${\bf t}_i$ and ${\bf t}_j$ are consecutive entries of the
sorted list of all $x$-coordinates of the terminals.
Analogously, a empty rectangle $R_{i,j}$ is called a
{\it horizontal strip} if the
$y$-coordinates of ${\bf t}_i$ and ${\bf t}_j$ are consecutive entries of the
sorted list of all $y$-coordinates of the terminals.
The {\it sides} of a vertical (resp., horizontal)
strip $R_{i,j}$ are the vertical (resp., horizontal) sides of
$R_{i,j}.$ Notice that two points ${\bf t}_i, {\bf t}_j$ may define both a
horizontal and a vertical strip. We say that the rectangles
$R_{i,i'}$ and $R_{j,j'}$ form a {\it crossing
configuration} if they intersect and they have the same slope. The
importance of such configurations resides in the following property
whose proof is straightforward:

\begin{lemma} \label{lemma-3.1} If the rectangles $R_{i,i'}$ and $R_{j,j'}$ form a
crossing configuration, then from the two couples of directed $l_1$-paths connecting  ${\bf t}_i$
with ${\bf t}_{i'}$ and  ${\bf t}_j$ with ${\bf t}_{j'}$ one  can derive two couples of directed
$l_1$-paths connecting ${\bf t}_i$ with ${\bf t}_{j'}$ and ${\bf t}_j$ with ${\bf t}_{i'}.$
\end{lemma}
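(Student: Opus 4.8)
The plan is to set up coordinates so that both rectangles $R_{i,i'}$ and $R_{j,j'}$ have positive slope — the negative-slope case is symmetric, obtained by reflecting across a horizontal axis. So assume $\mathbf{t}_i$ is the lower-left corner and $\mathbf{t}_{i'}$ the upper-right corner of $R_{i,i'}$, and similarly $\mathbf{t}_j$ is the lower-left and $\mathbf{t}_{j'}$ the upper-right corner of $R_{j,j'}$. The hypothesis that the two rectangles intersect, together with the fact that they share the same slope, means (up to swapping the roles of the two rectangles) that the corner $\mathbf{t}_j$ lies inside $R_{i,i'}$ — i.e. $t_i^x \le t_j^x \le t_{i'}^x$ and $t_i^y \le t_j^y \le t_{i'}^y$ — while $\mathbf{t}_{i'}$ lies inside $R_{j,j'}$, so that $t_j^x \le t_{i'}^x \le t_{j'}^x$ and $t_j^y \le t_{i'}^y \le t_{j'}^y$. (One should record the precise chain of coordinate inequalities here; they are what makes everything below go through.) In particular $\mathbf{t}_i$ is the lower-left corner of $R_{i,j'} = R(\mathbf{t}_i,\mathbf{t}_{j'})$ and $\mathbf{t}_j$ is the lower-left corner of $R_{j,i'} = R(\mathbf{t}_j,\mathbf{t}_{i'})$, and both of these rectangles also have positive slope.

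Next I would observe that under these coordinate constraints, the point $\mathbf{t}_{i'}$ lies in the intersection $R_{i,i'}\cap R_{j,j'}$, and more usefully: any monotone (directed $l_1$-) path from $\mathbf{t}_i$ to $\mathbf{t}_{i'}$ stays inside $R_{i,i'}$, while any monotone path from $\mathbf{t}_j$ to $\mathbf{t}_{j'}$ stays inside $R_{j,j'}$, and these two rectangles overlap in a sub-rectangle containing $\mathbf{t}_{i'}$ in a corner. The key combinatorial fact is a \emph{path-swapping} argument: given a directed path $P$ from $\mathbf{t}_i$ to $\mathbf{t}_{i'}$ and a directed path $Q$ from $\mathbf{t}_j$ to $\mathbf{t}_{j'}$, since both are monotone in the same (northeast) direction and $\mathbf{t}_{i'}\in R_{j,j'}$, the path $Q$ must cross every ``staircase'' separating $\mathbf{t}_j$ from $\mathbf{t}_{j'}$, and in particular $P$ and $Q$ must share a common point $p$ on the boundary region near $\mathbf{t}_{i'}$ — concretely, $Q$ passes through some point $p$ with $p^x = t_{i'}^x$ or $p^y = t_{i'}^y$ that also lies on $P$, or more simply $P$ and $Q$ intersect because they are both monotone paths threading the overlapping rectangles. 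At such an intersection point $p$, splice: follow $P$ from $\mathbf{t}_i$ to $p$, then follow $Q$ from $p$ to $\mathbf{t}_{j'}$; this is a directed $l_1$-path from $\mathbf{t}_i$ to $\mathbf{t}_{j'}$ (monotonicity is preserved since both pieces are northeast-monotone). Symmetrically, follow $Q$ from $\mathbf{t}_j$ to $p$, then $P$ from $p$ to $\mathbf{t}_{i'}$, giving a directed path from $\mathbf{t}_j$ to $\mathbf{t}_{i'}$. Doing this for each of the two paths in each couple (the couple connecting $\mathbf{t}_i$ with $\mathbf{t}_{i'}$ consists of a path in each direction, likewise for $\mathbf{t}_j$ with $\mathbf{t}_{j'}$) yields the two required couples of directed $l_1$-paths for the pairs $(\mathbf{t}_i,\mathbf{t}_{j'})$ and $(\mathbf{t}_j,\mathbf{t}_{i'})$, and since each spliced path uses only edges already present in the two given paths, they lie in whatever network contained the originals.

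The main obstacle — and the only place where a little care is genuinely needed — is verifying that two monotone directed paths of the same slope, one running corner-to-opposite-corner of $R_{i,i'}$ and the other corner-to-opposite-corner of $R_{j,j'}$, necessarily intersect, given the crossing configuration. This is essentially a discrete Jordan-curve / monotone-staircase argument: because $\mathbf{t}_j \in R_{i,i'}$ and $\mathbf{t}_{i'}\in R_{j,j'}$, the path $P$ (from $\mathbf{t}_i$ northeast to $\mathbf{t}_{i'}$) and the path $Q$ (from $\mathbf{t}_j$ northeast to $\mathbf{t}_{j'}$) have ``interleaved'' endpoints along the boundary of the region $R_{i,i'}\cup R_{j,j'}$, so a monotone path cannot get from one side to the other without meeting the other path. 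I would state this as a short sub-claim and prove it by considering, say, the last point of $P$ with $x$-coordinate at most $t_j^x$ — actually it is cleanest to note $\mathbf{t}_{i'}$ itself is a common point of $R_{i,i'}$ and $R_{j,j'}$ and argue that $Q$, being monotone from $\mathbf{t}_j$ (which is $\le \mathbf{t}_{i'}$ coordinatewise) to $\mathbf{t}_{j'}$ (which is $\ge \mathbf{t}_{i'}$ coordinatewise), must pass through a point that $P$ also passes through. Everything else is the bookkeeping of the coordinate inequalities and the observation that splicing two northeast-monotone directed paths at a common vertex gives a northeast-monotone directed path.
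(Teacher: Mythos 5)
The splice-at-a-common-point idea is the right one (the paper in fact omits the proof, calling it straightforward, and this is the intended argument), but your geometric setup of a crossing configuration is wrong, and the error lands exactly on the step you yourself flag as the delicate one. You assume the overlap is corner-nested: $\mathbf{t}_j\in R_{i,i'}$ and $\mathbf{t}_{i'}\in R_{j,j'}$, i.e. $\mathbf{t}_i\le \mathbf{t}_j\le \mathbf{t}_{i'}\le \mathbf{t}_{j'}$ coordinatewise. In that configuration your key sub-claim is false: take $\mathbf{t}_i=(0,0)$, $\mathbf{t}_{i'}=(10,10)$, $\mathbf{t}_j=(5,5)$, $\mathbf{t}_{j'}=(20,20)$, let $P$ go east to $(10,0)$ and then north, and let $Q$ go north to $(5,20)$ and then east; these two northeast-monotone paths are disjoint, no splice exists, and indeed the conclusion of the lemma itself fails for such data (the southwest-directed paths are useless for assembling a northeast path from $\mathbf{t}_i$ to $\mathbf{t}_{j'}$). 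So neither ``interleaved endpoints'' nor ``$\mathbf{t}_{i'}$ is a common point of the two rectangles'' can be turned into a proof in the setting you describe.

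What the paper means by a crossing configuration (see Fig.~\ref{pareto} and every place the lemma is invoked) is a \emph{transversal} crossing: the $x$-projection of one rectangle is contained in the $x$-projection of the other while the $y$-projections are nested the opposite way, so the intersection is the central rectangle $R(o,o')$ and the four terminals sit on the four tips. With positive slope, the northeast path $P$ from $\mathbf{t}_{i'}$ to $\mathbf{t}_i$ is confined to the ``vertical'' rectangle and must traverse $R(o,o')$ from its bottom side to its top side, while the northeast path $Q$ from $\mathbf{t}_{j'}$ to $\mathbf{t}_j$ traverses it from its left side to its right side; a bottom-to-top monotone staircase separates the left side of $R(o,o')$ from its right side inside that rectangle, so $P$ and $Q$ meet at some point $p$, and your splicing then goes through verbatim (and again for the pair of southwest-directed paths, which meet at some point $p'$ by the same argument). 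With the setup corrected this way the rest of your write-up is fine: the nested-projection inequalities are exactly the deferred ``chain of coordinate inequalities,'' and they guarantee that each spliced monotone path joins a point to one dominating it coordinatewise, hence is a directed $l_1$-path.
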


\begin{figure}[h]
 \centering \includegraphics[scale=0.6]{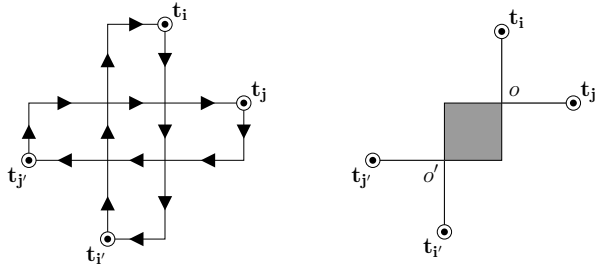}
 \caption{A crossing configuration}
\label{pareto}
\end{figure}

For a crossing
configuration defined by the strips $R_{i,i'},R_{j,j'}$ we can define a rectangle with four tips as illustrated in Fig. \ref{pareto}.
Let $o$ and $o'$ be the two opposite corners of this rectangle labeled in such a way that the four tips
connect $o$ with ${\bf t}_i,{\bf t}_j$ and $o'$ with
${\bf t}_{i'},{\bf t}_{j'}.$ Additionally, suppose without loss of generality,
that ${\bf t}_i$ and ${\bf t}_j$ belong to ${\cal Q}_1(o),$ i.e., to the first
quadrant with respect to the origin $o.$ Then ${\bf t}_{i'}$ and ${\bf t}_{j'}$
belong to ${\cal Q}_3(o').$
Denote by $T_{i,j}$ the set of all
terminals ${\bf t}_k\in (T \setminus \{ {\bf t}_i,{\bf t}_j\}) \cap {\cal Q}_1(o)$
such that (i) $R({\bf t}_k,o)\cap T=\{ {\bf t}_k\}$ and (ii) the region
$\{q\in {\cal Q}_2(o):q^y\le {\bf t}_k^y\}\cup \{q\in {\cal Q}_4(o):q^x\le
{\bf t}_k^x\}$ does not contain any terminal of $T.$ When $T_{i,j}\neq \emptyset,$
we define the {\it staircase} ${\cal S}_{i,j|i',j'}$ as the
union of rectangles $\bigcup \{R(o', {\bf t}_k) : {\bf t}_k \in T_{i,j}\} \setminus R(o, o');$ see Fig.~\ref{staircase} for an illustration. The point
$o$ is called the {\it origin} of this staircase. Analogously one
can define the set $T_{i',j'}$ and the staircase ${\cal
S}_{i',j'|i,j}$ with origin $o'.$ Two other types of staircases will
be defined if ${\bf t}_i,{\bf t}_j$ belong to the second quadrant with respect
to $o$ and ${\bf t}_{i'},{\bf t}_{j'}$ belong to ${\cal Q}_4(o').$ In order to
simplify the presentation, further we will prove all results under
the assumption that the staircase is located in the first quadrant.
By symmetry, all these results also hold for the other types of
staircases.

\begin{figure}[h]
\begin{minipage}[b]{.49\linewidth}
\centering \includegraphics[width=7cm]{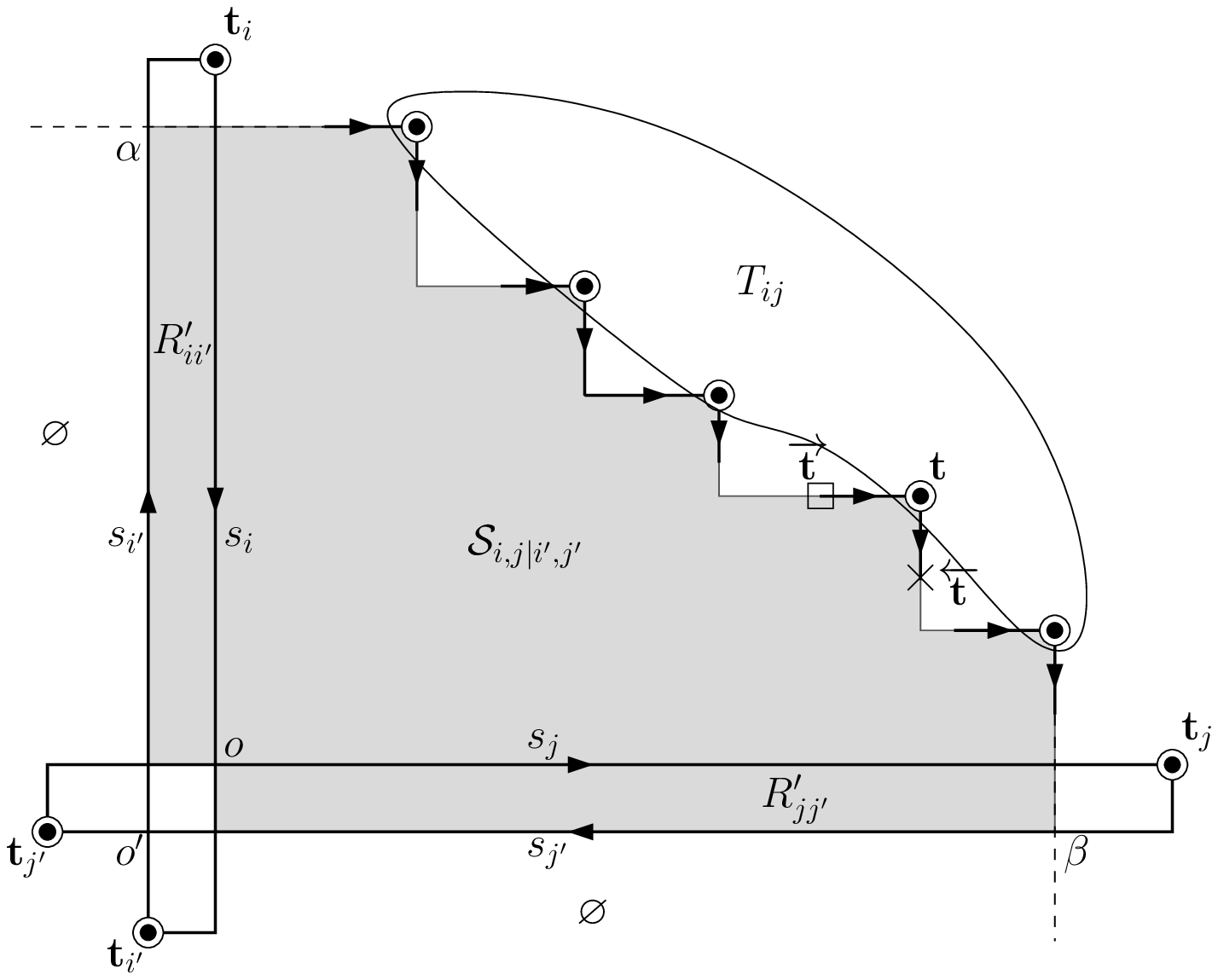}
\caption{Staircase ${\cal S}_{i,j|i',j'}$\label{staircase}}
\end{minipage}\hfill
\begin{minipage}[b]{.49\linewidth}
\centering \includegraphics[width=8cm]{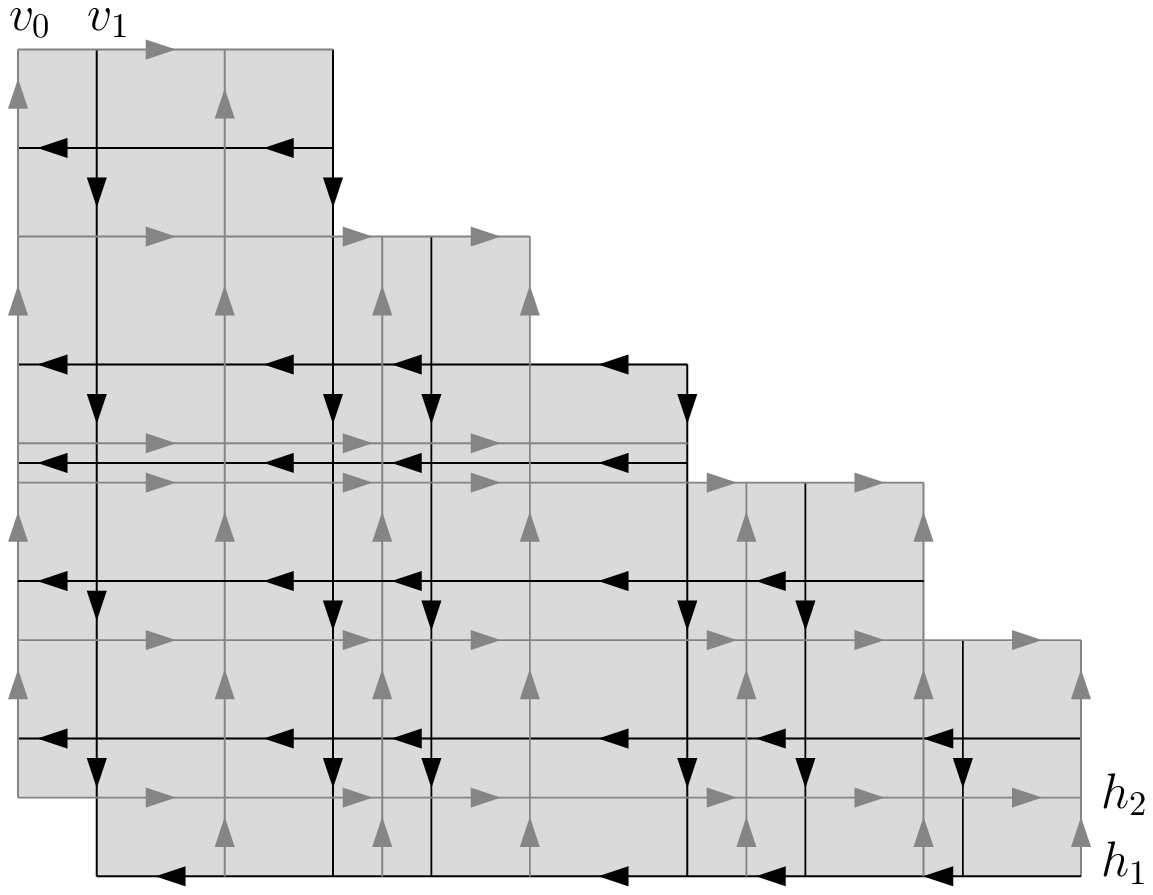}
\caption{${\cal S}_{i,j|i',j'}$ with $\Gamma^{even}_{i,j}$ and $\Gamma^{odd}_{i,j}$ oriented\label{staircase_ori}}
\end{minipage}
\end{figure}

Let $\alpha$ be the leftmost highest point of the staircase ${\cal
S}_{i,j|i',j'}$ and let $\beta$ be the rightmost lowest point of
this staircase.
By definition, ${\cal S}_{i,j|i',j'}\cap T=T_{i,j}.$ By the choice
of $T_{i,j},$ there are no terminals of $T$ located in the regions
$\{q\in {\cal Q}_2(o):q^y\le \alpha^y\}$ and $\{q\in {\cal
Q}_4(o):q^x\le \beta^x\}.$ 


\begin{figure}[h]
\centering \includegraphics[scale=0.8]{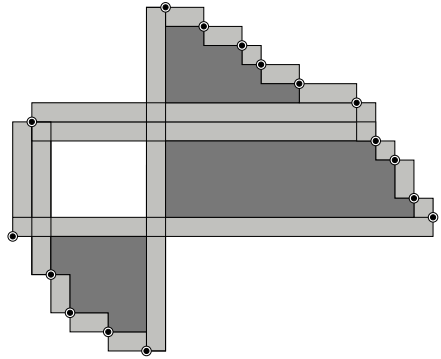} \caption{An example of a strip-staircase decomposition\label{bande_escalier}}
\end{figure}

\section{Blocks and generating sets}

For a strip
$R_{i,j},$  the terminals ${\bf t}_i$ and ${\bf t}_j$ can be connected by two
directed Manhattan paths  only if we take the boundary of the rectangle
$R_{i,j}$ in the solution
and orient it accordingly. Therefore the boundary of each strip $R_{i,j}$ belongs to all bidirected Manhattan networks. $R_{i,j}$ has only two orientations
(clockwise and counterclockwise) producing the two directed Manhattan paths between ${\bf t}_i$ and ${\bf t}_j$. We say that (the orientations of) two rectangles $R_{i,j}$ and $R_{i',j'}$ are {\it compatible} if they have the same slope and are oriented in the same way or if they have different slopes and are oriented in opposite ways.
Clearly, two strips sharing an edge $e$ of $\Gamma(T)$ must be compatible.
We extend this property to larger sets, called blocks.

Let $P\subseteq T$ be a maximal by inclusion set of terminals such that for all  ${\bf t}_i\in P$ (the same) two opposite quadrants centered at ${\bf t}_i$ are empty, i.e., their
intersections with $T$ consist only of ${\bf t}_i;$ suppose that these empty quadrants are
the second and the fourth quadrants $Q_2({\bf t}_i)$ and $Q_4({\bf t}_i)$. Now, let the points of $P$  be sorted by $x$-coordinate. The $i$th {\it block}  is the set of all terminals of
$T$ contained in the axis-parallel rectangle spanned by $i$th and $(i+1)$th points of $P,$ the first block consists of all terminals located in the third quadrant
defined by the first point of $P,$ and the last block  consists of all terminals located in the first quadrant defined by the last point of $P.$ From the definition
follows that two terminals defining a strip belong to a common block (which some abuse of language, we will say that the strip itself belongs to this block).

\begin{lemma} \label{orientation_strips} In any bidirected Manhattan network $N(T)$, all strips from the same block $B$ are compatible.
\end{lemma}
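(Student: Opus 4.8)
\medskip
\noindent\textbf{Proof plan.}
The plan is to turn the statement into a connectivity assertion. First note that ``compatible'' is an equivalence relation on oriented strips: label a positive-slope strip by its sense of rotation and a negative-slope strip by the opposite sense; then two strips are compatible exactly when they carry the same label in $\{+,-\}$. Hence if we build a graph $H_B$ on the strips of $B$, joining two strips whenever $N(T)$ is forced to orient them in the same type, it is enough to show that $H_B$ is connected, for then all strips of $B$ receive a common label, which is the claim. Edges of $H_B$ will be produced in two ways: (i) if two strips share an edge of $\Gamma(T)$, that edge gets a single direction in $N(T)$, so the two strips are compatible; (ii) where (i) is not available, through a crossing configuration, using Lemma~\ref{lemma-3.1} together with the fact that each edge of $\Gamma(T)$ is used in $N(T)$ with a unique orientation.

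Second, I would read off the geometry of $B$ from its definition. List the terminals of $B$ as $u_1,\dots,u_m$ in increasing order of $x$-coordinate; they are consecutive in the global $x$-order. The extreme terminals of a block have two empty opposite quadrants, which forces $u_1$ to be the lowest and $u_m$ the highest terminal of $B$, while every interior terminal $u_k$ has a terminal strictly in $\mathcal Q_2(u_k)$ or strictly in $\mathcal Q_4(u_k)$ (otherwise $u_k$ would lie in $P$). The vertical strips of $B$ are precisely the rectangles $S_k=R(u_k,u_{k+1})$ for $1\le k\le m-1$, two consecutive ones meeting along the vertical line through $u_{k+1}$; symmetrically there are $m-1$ horizontal strips chained along horizontal lines, and I would also check that every vertical strip of $B$ is joined in $H_B$ to a horizontal strip of $B$ (say, through a grid edge incident to a terminal common to both). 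It then remains to connect $S_k$ and $S_{k+1}$ in $H_B$. If $u_k$ and $u_{k+2}$ lie on the same side of the horizontal line through $u_{k+1}$, then the right side of $S_k$ and the left side of $S_{k+1}$ overlap in a nondegenerate segment, so $S_k$ and $S_{k+1}$ share an edge of $\Gamma(T)$. Otherwise $u_k,u_{k+1},u_{k+2}$ are monotone in $y$: then $u_{k+1}$ is interior, and I would route through the two horizontal strips of $B$ lying just below and just above $u_{k+1}$, each of which shares with $S_k$, respectively $S_{k+1}$, a grid edge of the vertical line through $u_{k+1}$; so it suffices to connect these two horizontal strips, which by the coordinate-symmetric version of the argument is again a question about two consecutive strips.

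The main obstacle, I expect, is that this last reduction can loop: $u_{k+1}$ then sits on a maximal chain of terminals that is strictly increasing in both coordinates on one side of $u_{k+1}$ and strictly decreasing in both coordinates on the other. To break the loop I would follow the chain outward from $u_{k+1}$; being finite it has two ends, and at an end either the monotonicity breaks, which produces a direct sharing of an edge of $\Gamma(T)$ between two consecutive strips, or the chain reaches a terminal of $P$, whose two empty opposite quadrants then furnish a crossing configuration on four distinct terminals to which Lemma~\ref{lemma-3.1} applies; combined with the uniqueness of the orientation of each edge of $\Gamma(T)$ in $N(T)$, either outcome forces the two sides of the chain into a common type. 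This shows that all vertical strips of $B$ lie in one component of $H_B$; interchanging the two coordinates does the same for the horizontal strips of $B$; and since the two families are already linked, $H_B$ is connected and all strips of $B$ are compatible. (The first and last blocks, being quarter-plane regions, are handled in the same way, with $u_1$ or $u_m$ playing the role of the single relevant terminal of $P$.)
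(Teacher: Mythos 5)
Your overall strategy---reduce the lemma to the connectivity of a graph on the strips of $B$ whose edges are forced-compatibility relations---matches the paper's first step, and your description of the vertical and horizontal strips of a block is essentially correct. But the heart of the lemma is the case of two consecutive strips that meet only in a terminal ${\bf t}$, their other defining terminals lying in opposite quadrants $Q_1({\bf t})$ and $Q_3({\bf t})$, and there your argument has a genuine gap. First, crossing configurations do not force compatibility: Lemma~\ref{lemma-3.1} produces the derived directed paths whichever relative orientation the two crossing strips carry (the two directed boundary cycles always meet at a corner of the intersection rectangle in a way that permits switching), so ``Lemma~\ref{lemma-3.1} plus uniqueness of edge orientations'' is not a forcing mechanism. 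Second, the chain-following step does not close the loop: a monotonicity break or a point of $P$ at an \emph{end} of the diagonal chain makes two strips at that end compatible, but says nothing about the pair $S_k,S_{k+1}$ sitting at an \emph{interior} vertex of the chain, since every interior link is exactly the unresolved case. Worse, the claim that reaching a point of $P$ ``forces the two sides of the chain into a common type'' is false: for $T=\{(0,0),(1,1),(2,2)\}$ the two unit squares, one oriented clockwise and the other counterclockwise, already form a (here even optimal) bidirected Manhattan network, so consecutive strips across a point of $P$ are genuinely independent --- which is precisely why blocks are cut at the points of $P$.

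The missing idea is local and uses the fact that the shared terminal ${\bf t}$ is interior to the block, hence ${\bf t}\notin P$ and some terminal ${\bf t}_l$ lies in $Q_2({\bf t})$ or $Q_4({\bf t})$. The only two grid edges incident to ${\bf t}$ and contained in the closure of $Q_4({\bf t})$ are boundary edges of the two strips, and if the strips are incompatible these two edges are both directed into ${\bf t}$ or both directed out of ${\bf t}$ (the two edges bounding $Q_2({\bf t})$ then having the opposite defect). Since every directed Manhattan path between ${\bf t}$ and ${\bf t}_l$ stays inside $R({\bf t},{\bf t}_l)$, which is contained in the offending quadrant, one of the two required directed paths between ${\bf t}$ and ${\bf t}_l$ cannot exist --- a contradiction. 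With this observation your connectivity framework goes through at once (one may simply declare two strips adjacent when they share a grid edge \emph{or a terminal}, as the paper does, since the terminal-sharing case is now settled directly), and the chain analysis becomes unnecessary.
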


\begin{proof} Consider a graph whose vertices are the strips from $B$ and two strips are adjacent if and only if their boundaries share an edge of $\Gamma(T)$ or a terminal. This graph is connected because  its two subgraphs induced by horizontal and vertical strips are connected and any terminal of $B$ defines in $B$ a vertical and a horizontal strip which are adjacent in this graph.
Therefore, if $B$ contains incompatible strips, then we can find in this graph two adjacent incompatible strips  $R_{i,j}$ and  $R_{j',k}$. Since two strips sharing an edge are compatible, $R_{i,j}$ and $R_{j',k}$ necessarily share a terminal, say $j=j'.$
We can suppose without loss of generality that ${\bf t }_i \in Q_1({\bf t}_j)$ and ${\bf t}_k \in Q_3({\bf t}_j),$ otherwise the boundaries of these strips overlap. Hence  ${\bf t}_j \notin P,$ and $Q_2({\bf t}_j)$ or $Q_4({\bf t}_j)$ is not empty. Suppose that there is a point ${\bf t}_l \in Q_4({\bf t}_j)$ (see Fig. \ref{block} for an illustration).
Since $R_{i,j}$ and $R_{j,k}$ are incompatible, ${\bf t}_j$ is the head or the tail of both edges incident to ${\bf t}_j$ in $Q_4({\bf t}_j),$ say the tail.
 Since ${\bf t}_l \in Q_4({\bf t}_j),$ the rectangle $R_{j,l}$ is also included in $Q_4({\bf t}_j).$ Since any (directed or not) Manhattan path between ${\bf t}_j$ and ${\bf t}_l$ is included in $R_{j,l}$  and both edges of $N(T)\cap R_{j,l}$ incident to ${\bf t}_j$ are directed away from ${\bf t}_j$, we will not be able to produce a directed Manhattan path from ${\bf t}_l$ to ${\bf t}_j$ in $N(T),$ a contradiction. $\Box$
\end{proof}

\begin{figure}[h]
\label{block}
\centering \includegraphics[scale=0.6]{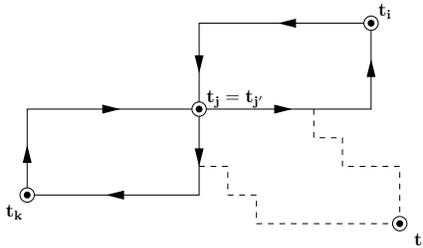}
\caption{To the proof of Lemma \ref{orientation_strips}.}
\end{figure}

%

We continue by
adapting to BDMMN the notion of a generating set introduced in \cite{KaImAs}
for MMN problem: a {\it generating set}  is a subset $F$ of ordered pairs of terminals of $T$ with
the property that a directed subnetwork of $\Gamma(T)$ containing directed Manhattan paths between
all pairs of $F$ is a bidirected Manhattan network for $T.$  Let $F'$ be the set of all  ordered pairs of
terminals defining strips. Let also $F''$ be the set of all ordered pairs $({\bf t}_{j'},{\bf t}_l)$ and $({\bf t}_{l},{\bf t}_{j'})$
such that there exists a staircase ${\cal S}_{i,j|i',j'}$ with
${\bf t}_l$ belonging to the set $T_{i,j}$ of all terminals defining the corners of ${\cal S}_{i,j|i',j'}$. The proof of the following result closely
follows the proof of Lemma 3.3 of  \cite{ChNouVa}.

\begin{lemma} \label{generating_set_oriented} $F:=F'\cup F''$ is a generating set.
\end{lemma}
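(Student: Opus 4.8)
The goal is to show that if a directed subnetwork $N$ of $\Gamma(T)$ contains a directed Manhattan path for every ordered pair in $F = F' \cup F''$, then $N$ contains a directed Manhattan path for \emph{every} ordered pair of terminals; equivalently, $F$ is a generating set. The plan is to mimic the proof of Lemma~3.3 in \cite{ChNouVa}, where the analogous statement was established for the (undirected) MMN problem, and carry the argument through while keeping track of orientations. I would proceed by contradiction: suppose $N$ realizes all pairs of $F$ but there is an ordered pair $({\bf t}_i, {\bf t}_j)$ not connected by a directed Manhattan path in $N$; among all such ``bad'' pairs, choose one for which the rectangle $R_{i,j}$ is minimal (say, minimal with respect to inclusion, or minimal in the number of terminals it contains, or minimal area over $\Gamma(T)$-vertices). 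The base case is that $R_{i,j}$ is empty; the inductive step reduces a non-empty $R_{i,j}$ to strictly smaller rectangles.

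\textbf{The non-empty case.} If $R_{i,j}$ contains a third terminal ${\bf t}_k$ in its interior or on a side, then $R_{i,k}$ and $R_{k,j}$ are both strictly contained in $R_{i,j}$, so by minimality both ordered pairs $({\bf t}_i,{\bf t}_k)$ and $({\bf t}_k,{\bf t}_j)$ are realized by directed Manhattan paths in $N$; concatenating these two directed paths yields a directed path from ${\bf t}_i$ to ${\bf t}_j$ whose length is $\|{\bf t}_i - {\bf t}_k\|_1 + \|{\bf t}_k - {\bf t}_j\|_1 = \|{\bf t}_i - {\bf t}_j\|_1$ (since ${\bf t}_k \in R_{i,j}$), hence a directed Manhattan path, a contradiction. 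This reduces everything to the case $R_{i,j}$ empty, i.e.\ $(i,j) \in F_\varnothing$.

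\textbf{The empty case.} Here I would invoke the strip-staircase structure. If $R_{i,j}$ is a strip, then $(i,j) \in F' \subseteq F$ and we are done. Otherwise, as in \cite{ChNouVa}, an empty rectangle that is not a strip must be ``covered'' by a crossing configuration of two strips, or ${\bf t}_i, {\bf t}_j$ are two consecutive corners of a staircase, or one of them lies on the base of a staircase whose corner set $T_{i',j'}$ contains the other. In the crossing-configuration case, the two strips forming the configuration lie in $F'$, so $N$ contains the two couples of directed $l_1$-paths connecting their defining terminal pairs; by Lemma~\ref{lemma-3.1} these can be recombined into directed $l_1$-paths connecting ${\bf t}_i$ with ${\bf t}_j$, which are automatically shortest since $R_{i,j}$ is empty. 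In the staircase cases, the relevant pairs — strips adjacent to the staircase, together with the pairs $({\bf t}_{j'},{\bf t}_l)$ and $({\bf t}_l,{\bf t}_{j'})$ for the corner terminals ${\bf t}_l \in T_{i',j'}$ — all lie in $F' \cup F''$, and one assembles the desired directed Manhattan path by routing along the staircase boundary and a strip, again using that $R_{i,j}$ is empty to guarantee shortest length. One must check both orderings $({\bf t}_i, {\bf t}_j)$ and $({\bf t}_j, {\bf t}_i)$, which is why $F''$ was defined to contain each pair in both directions.

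\textbf{Main obstacle.} The routine part is the reduction for non-empty rectangles; the delicate part is the empty case, specifically verifying that every empty non-strip rectangle genuinely falls into one of the crossing/staircase cases handled by $F' \cup F''$ \emph{with the correct orientations}, so that the directed paths guaranteed by $F$ can actually be spliced into a coherently directed Manhattan path rather than merely an undirected one. In particular, when recombining paths via Lemma~\ref{lemma-3.1} or routing through a staircase, one must confirm that the orientation of each edge used is consistent along the whole path — this is exactly where the bidirected problem differs from the original MMN argument of \cite{ChNouVa}, and where the case analysis must be carried out carefully. I expect this orientation-bookkeeping across the staircase cases to be the bulk of the work.
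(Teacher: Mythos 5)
Your overall strategy---reduce everything to empty pairs, then realize each empty pair using strips, crossing configurations via Lemma~\ref{lemma-3.1}, and the staircase pairs of $F''$---is the same as the paper's, and your reduction for non-empty rectangles is correct (the paper compresses it into the single remark that $F_{\varnothing}$ is clearly a generating set). The gap is that the empty case, which is the entire content of the lemma, is left as a plan, and the trichotomy you propose for it is not the right one. You assert that for an empty non-strip rectangle either it is covered by a crossing configuration of two strips, or the two terminals are consecutive corners of a staircase, or one terminal lies on the base of a staircase whose corner set contains the other. This does not exhaust the cases: for instance, an empty pair $({\bf t}_k,{\bf t}_{k'})$ whose two terminals belong to the corner sets of two \emph{different} staircases fits none of your three alternatives. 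The paper's argument is instead a two-stage construction. It takes the four extreme strips crossing $R_{k,k'}$ (leftmost/rightmost vertical, lowest/highest horizontal), which pair up into two crossing configurations; it first establishes couples of directed $l_1$-paths between ${\bf t}_k$ and ${\bf t}_{i_2}$ (the far endpoint of the rightmost vertical strip) and between ${\bf t}_{k'}$ and ${\bf t}_{j_1}$, via a three-case analysis (${\bf t}_k$ is itself an endpoint of that strip; or it is an endpoint of the highest horizontal strip, so Lemma~\ref{lemma-3.1} applies to that crossing configuration; or else ${\bf t}_k$ is shown to lie in the corner set $T_{i'_2,j'_2}$ of the staircase determined by the two extreme strips---which requires verifying the two emptiness conditions in the definition of $T_{i'_2,j'_2}$---so the $F''$-pairs apply). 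It then recombines these paths through the crossing configuration formed by $R_{k,i_2}$ and $R_{k',j_1}$, invoking Lemma~\ref{lemma-3.1} once more. None of this intermediate scaffolding appears in your sketch, and it is precisely the part that cannot be waved through.

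A secondary remark: the ``orientation bookkeeping'' you single out as the main obstacle is in fact the easy part. Lemma~\ref{lemma-3.1} is already stated for \emph{couples of directed} $l_1$-paths, and $F''$ contains each staircase pair in both orders, so every splicing step automatically yields directed Manhattan paths in both directions; no delicate consistency check along the path is needed. The genuine work is the geometric case analysis described above.
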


\begin{proof} The set $F_{\varnothing}$ of empty pairs is clearly a generating set.
Let $N$ be a bidirected rectilinear network containing directed $l_1$-paths
for all ordered pairs in $F.$ To prove that $N$ is a bidirected Manhattan network on
$T$, it suffices to establish that for  any  arbitrary pairs $(k,k'),(k',k)\in F_{\varnothing}\setminus F,$ in $N$ there exists a directed Manhattan
path from ${\bf t}_k$ to
${\bf t}_{k'}$ and a directed Manhattan path from ${\bf t}_{k'}$ to ${\bf t}_k$.  Assume without loss of
generality that ${\bf t}^x_{k'}\le {\bf t}^x_k$ and ${\bf t}^y_{k'}\le {\bf t}^y_k.$ Since
$(k,k')\in F_{\varnothing},$ the rectangle $R_{k,k'}$ is empty.
The vertical and horizontal lines through the points ${\bf t}_k$ and
${\bf t}_{k'}$ partition the plane into the rectangle $R_{k,k'},$ four
open quadrants and four closed unbounded half-bands labeled
counterclockwise ${\cal H}_1,{\cal H}_2,{\cal H}_3,$ and ${\cal
H}_4$ (see Fig. \ref{lemme_generateur}). Consider the leftmost terminal ${\bf t}_{i_1}$ of
${\cal H}_1$ (this
terminal exists because ${\bf t}_k\in {\cal H}_1$). Now, consider the
rightmost terminal ${\bf t}_{i'_1}$ of ${\cal H}_3$ such that
${\bf t}_{i'_1}^x\le {\bf t}_{i_1}^x$ (again this terminal exists because ${\bf t}_{k'}\in {\cal
H}_3$ and ${\bf t}_{k'}^x \le {\bf t}_{i_1}^x$). By the choice of ${\bf t}_{i_1}$ and
${\bf t}_{i'_1},$ the rectangle $R_{i_1,i'_1}$ is the leftmost vertical
strip crossing the rectangle $R_{k,k'}.$ Analogously, by letting
${\bf t}_{i_2}, {\bf t}_{j_1},$ and ${\bf t}_{j_2}$ be the rightmost terminal of
${\cal H}_3,$ the lowest
terminal of ${\cal H}_4$ and the highest terminal of ${\cal H}_2,$
respectively, we obtain the rightmost  vertical strip
$R_{i_2,i'_2},$ the lowest horizontal strip $R_{j_1,j'_1},$ and the
highest horizontal strip $R_{j_2,j'_2}$ crossing the rectangle
$R_{k,k'}.$ Notice that the strips $R_{j_2,j'_2}$ and $R_{i_2,i'_2}$
as well as the strips $R_{j_1,j'_1}$ and $R_{i_1,i'_1}$ constitute
crossing configurations.

\begin{figure}
\begin{center}
\centering\includegraphics[scale=0.9]{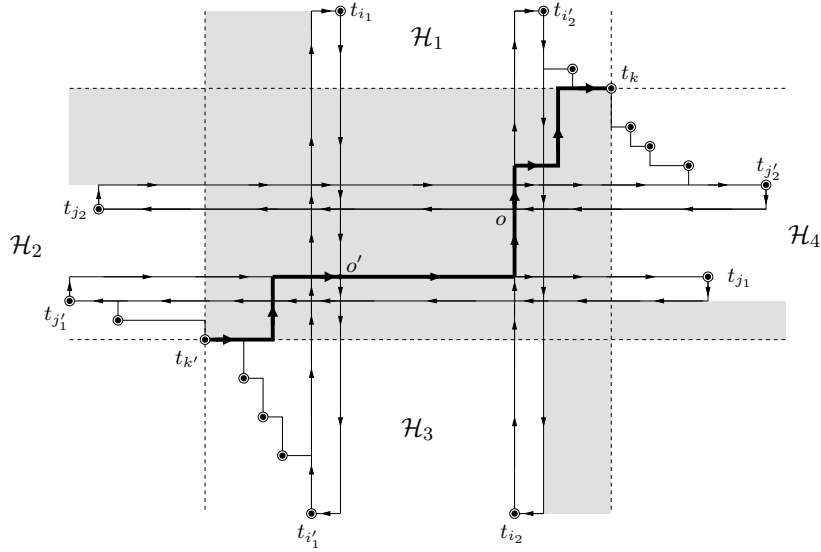}
\caption{To the proof of Lemma \ref{generating_set_oriented}}
\label{lemme_generateur}
\end{center}
\end{figure}

Now, we will prove that $N$ contains a directed  $l_1$-path from ${\bf t}_{i_2}$ to
${\bf t}_k$ and a directed $l_1$-path from ${\bf t}_{k'}$ and ${\bf t}_{j_1}.$ We
distinguish three cases. If ${\bf t}_k={\bf t}_{i'_2},$ then
$R_{i_2,k}=R_{i_2,i'_2}$ is a strip and thus $(k,i_2)\in F.$ If
${\bf t}_k={\bf t}_{j'_2},$ then the strips $R_{j_2,k}$ and $R_{i_2,i'_2}$ form
a crossing configuration. By Lemma \ref{lemma-3.1}, from the directed
$l_1$-paths of $N$ running from ${\bf t}_{j_2}$ to ${\bf t}_{k}$ and from
${\bf t}_{i_2}$ to ${\bf t}_{i'_2},$ we can derive a couple of directed $l_1$-paths connecting ${\bf t}_k$
with ${\bf t}_{i_2}.$ Finally, if ${\bf t}_k\notin \{{\bf t}_{i'_2},{\bf t}_{j'_2}\},$ we
assert that the crossing configuration $R_{i_2,i'_2}$ and
$R_{j_2,j'_2}$ defines a staircase ${\cal S}_{i'_2,j'_2|i_2,j_2}$ such
that ${\bf t}_k$ belongs to $T_{i'_2,j'_2}.$ Indeed, let $o$ be the
highest leftmost intersection point of the strips $R_{i_2,i'_2}$ and
$R_{j_2,j'_2}$ (see Fig. \ref{lemme_generateur}). Since $R({\bf t}_k,o)$ is contained in the
empty rectangle $R_{k,k'},$ we conclude that $R({\bf t}_k,o) \cap T
= \{{\bf t}_k\}.$ Moreover, by the choice of ${\bf t}_{i_2}$ and ${\bf t}_{j_2},$ the
unbounded half-bands $\{q\in {\cal H}_3 : q^x\ge {\bf t}_{i'_2}\}$ and
$\{q\in {\cal H}_2 : q^y \ge {\bf t}_{j'_2}\}$ do not contain terminals
(in Fig.~\ref{lemme_generateur}, the shaded region does not contain terminals), thus
establishing our assertion. This implies that ${\bf t}_k\in
T_{i'_2,j'_2},$ whence $(k,i_2),(i_2,k)\in F.$ Therefore, in all three
cases the terminals ${\bf t}_k$ and ${\bf t}_{i_2}$ are connected in $N$ by a couple of directed
$l_1$-paths. Using a similar analysis, one can show that ${\bf t}_{k'}$ and
${\bf t}_{j_1}$ are also connected in $N$ by a couple of directed $l_1$-paths. By
construction, the rectangles $R_{k,i_2}$ and $R_{k',j_1}$ form a
crossing configuration and thus, by Lemma \ref{lemma-3.1}, there is
a couple of $l_1$-paths of $N$ between the terminals ${\bf t}_k$ and ${\bf t}_{k'},$
concluding the proof. $\Box$

\end{proof}

For a staircase ${\cal S}_{i,j|i',j'},$ let $T^+_{i,j}$ be the set consisting of $T_{i,j},$ the four terminals ${\bf t}_i,{\bf t}_j,{\bf t}_{i'},{\bf t}_{j'}$ of the bases of ${\cal S}_{i,j|i',j'}$, and the terminals defining strips touching the boundary of ${\cal S}_{i,j|i',j'}.$

\begin{lemma} \label{one_block} $T^+_{i,j}$ is included in a block.
\end{lemma}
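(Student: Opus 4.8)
The plan is to exhibit one block $B$, with defining set $\widehat B$ (either the rectangle $R(p_a,p_{a+1})$ spanned by two consecutive points of $P$, or the quadrant $Q_3(p_1)$, resp. $Q_1(p_m)$, of the first, resp. last, block), and to show $T^+_{i,j}\subseteq \widehat B\cap T$. Since each $\widehat B$ is an axis-parallel box or quadrant, it contains $R(u,v)$ whenever it contains the points $u,v$, so it suffices to place finitely many points inside it. First I would locate the four base terminals in one block: the pairs $\{{\bf t}_i,{\bf t}_{i'}\}$ and $\{{\bf t}_j,{\bf t}_{j'}\}$ define the strips $R_{i,i'}$, $R_{j,j'}$ of the crossing configuration, and each strip lies in the set $\widehat B$ of its block (recall the two terminals defining a strip lie in a common block). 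Because $R_{i,i'}$ and $R_{j,j'}$ cross, their common interior contains the $2$-dimensional central rectangle $R(o,o')$, whereas the sets $\widehat B$ of two distinct blocks meet in at most one point (a point of $P$); hence the two strips share a block $B$, and $\widehat B$ contains ${\bf t}_i,{\bf t}_j,{\bf t}_{i'},{\bf t}_{j'}$, the corners $o,o'$, and $R(o,o')$.

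Next, working in coordinates with $o$ the origin and the staircase in ${\cal Q}_1(o)$, one has ${\bf t}_i=(o^x,{\bf t}_i^y)$ with ${\bf t}_i^y>o^y$ and ${\bf t}_j=({\bf t}_j^x,o^y)$ with ${\bf t}_j^x>o^x$. For ${\bf t}_k\in T_{i,j}$, condition (i) in the definition of $T_{i,j}$ says $R({\bf t}_k,o)\cap T=\{{\bf t}_k\}$; if ${\bf t}_k^x>{\bf t}_j^x$ then ${\bf t}_j\in R({\bf t}_k,o)$, a contradiction, and symmetrically ${\bf t}_k^y\le{\bf t}_i^y$. Thus ${\bf t}_k\in R(o,c)$ with $c=({\bf t}_j^x,{\bf t}_i^y)$, and since $o,c\in \widehat B$ we get $T_{i,j}\cup\{{\bf t}_i,{\bf t}_j\}\subseteq \widehat B$. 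Moreover $o'\in\widehat B$ and every ${\bf t}_k\in\widehat B$, so $\bigcup\{R(o',{\bf t}_k):{\bf t}_k\in T_{i,j}\}\subseteq \widehat B$ and the whole staircase ${\cal S}_{i,j|i',j'}$ lies in $\widehat B$.

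It remains to handle the terminals defining strips that touch the boundary of ${\cal S}_{i,j|i',j'}$. I would first check that none of the terminals in $T_{i,j}\cup\{{\bf t}_i,{\bf t}_j,{\bf t}_{i'},{\bf t}_{j'}\}$ lies in $P$, because each of them has one of the quadrants $Q_2,Q_4$ nonempty: ${\bf t}_j\in Q_4({\bf t}_i)$, ${\bf t}_i\in Q_2({\bf t}_j)$, ${\bf t}_{j'}\in Q_2({\bf t}_{i'})$, ${\bf t}_{i'}\in Q_4({\bf t}_{j'})$, and ${\bf t}_i\in Q_2({\bf t}_k)$ for ${\bf t}_k\in T_{i,j}$ (using ${\bf t}_i^x=o^x<{\bf t}_k^x$ and ${\bf t}_k^y\le{\bf t}_i^y$). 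Hence each of these terminals belongs to a unique block, which by the previous paragraph is $B$. Now the boundary of ${\cal S}_{i,j|i',j'}$ is a union of grid segments lying on horizontal or vertical lines through terminals of $T_{i,j}\cup\{{\bf t}_i,{\bf t}_j,{\bf t}_{i'},{\bf t}_{j'}\}$: the treads and risers of the outer staircase pass through the steps of $T_{i,j}$, and the inner part of the boundary lies on $\partial R(o,o')$, i.e. on $x={\bf t}_i^x$, $x={\bf t}_{i'}^x$, $y={\bf t}_j^y$, $y={\bf t}_{j'}^y$. So any strip $R_{l,l'}$ touching this boundary has a side on such a line; the unique terminal on that line belongs to $T_{i,j}\cup\{{\bf t}_i,{\bf t}_j,{\bf t}_{i'},{\bf t}_{j'}\}$ and is a corner of $R_{l,l'}$, hence equals ${\bf t}_l$ or ${\bf t}_{l'}$. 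Being in $B$ and not in $P$, it forces the other defining terminal of the strip into $B$ as well. This places all of $T^+_{i,j}$ in $B$.

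The main obstacle is the last step: correctly describing the boundary of the staircase (together with its bases) and ruling out a strip meeting it that is not "anchored" at a terminal of $T_{i,j}\cup\{{\bf t}_i,{\bf t}_j,{\bf t}_{i'},{\bf t}_{j'}\}$. The convexity arguments for the steps and the quadrant checks are routine, as is the (easy) separate treatment of the first and last blocks, where $\widehat B$ is a quadrant rather than a rectangle.
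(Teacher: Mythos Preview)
Your proof is correct but takes a genuinely different route from the paper's. The paper argues by contradiction: it supposes some ${\bf t}_k\in P$ separates two terminals of $T^+_{i,j}$ into $Q_1({\bf t}_k)$ and $Q_3({\bf t}_k)$, then uses the emptiness of the staircase interior, of the base strips $R_{i,i'},R_{j,j'}$, and of the lateral regions $\{q\in{\cal Q}_2(o):q^y\le\alpha^y\}$ and $\{q\in{\cal Q}_4(o):q^x\le\beta^x\}$ to force ${\bf t}_k$ into the first quadrant of some concave vertex of the staircase; there, two consecutive terminals of $T_{i,j}$ land in $Q_2({\bf t}_k)$ and $Q_4({\bf t}_k)$, contradicting ${\bf t}_k\in P$. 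Your approach is direct and constructive: you first show that the two crossing base strips lie in one and the same $\widehat B$ (since distinct $\widehat B$'s meet in at most a point while the strips share a $2$-dimensional intersection), then bound $T_{i,j}\subseteq R(o,c)=R({\bf t}_i,{\bf t}_j)\subseteq\widehat B$ via condition~(i), and finally anchor every strip touching $\partial{\cal S}_{i,j|i',j'}$ at a terminal of $T_{i,j}\cup\{{\bf t}_i,{\bf t}_j,{\bf t}_{i'},{\bf t}_{j'}\}$ that you have already verified is not in $P$, so its unique block is $B$. Your argument is longer but entirely self-contained (it does not invoke the emptiness of the lateral regions) and it makes explicit which block contains $T^+_{i,j}$; the paper's proof is terser but relies on a geometric case analysis of where a hypothetical separating point of $P$ could sit.
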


\begin{proof} Suppose by way of contradiction that there exists  ${\bf t}_k\in P$ such that two terminals ${\bf t}_l$ and ${\bf t}_m$ of  $T^+_{i,j}$ belong to different quadrants $Q_1({\bf t}_k)$
and $Q_3({\bf t}_k)$ (and the quadrants $Q_2({\bf t}_k)$ and $Q_4({\bf t}_k)$ are empty). Since the interiors of  ${\cal S}_{i,j|i',j'},$ of  the rectangles $R_{i,i'},R_{j,j'},$ and of the  regions ${\mathcal Q}',{\mathcal Q}''$ defined in previous section are all empty, ${\bf t}_k$ can be located only in the first quadrant defined by a concave vertex of ${\cal S}_{i,j|i',j'}.$ But in this
case, we can find two terminals of ${\cal S}_{i,j|i',j'}$ located in $Q_2({\bf t}_k)$ and $Q_4({\bf t}_k),$ contrary to the choice of  ${\bf t}_k$ in $P$. $\Box$
\end{proof}

\section{The algorithm}
By Lemma \ref{orientation_strips}, all strips of any block are compatible. Since the strips from different blocks are  edge-disjoint, the algorithm can test the two possible orientations of each block independently of the orientations of other blocks. Thus, we can suppose that the algorithm oriented the strips of each block in the same way as in an optimal bidirected Manhattan network $N^*(T).$  Let $N'(T)$ be the union of
all boundaries of  strips directed this way. By Lemma \ref{one_block}, the bases of a staircase  ${\cal S}_{i,j|i',j'}$ and the strips touching ${\cal S}_{i,j|i',j'}$ belong to the same block $B$, therefore they are all compatible and their orientation can be assumed fixed. Since the bases of ${\cal S}_{i,j|i',j'}$ have the same slope, they are oriented both clockwise or both counterclockwise.
From \cite{ChNouVa} we know that any strip may touch the boundary of a staircase but cannot cross its interior.  Let $N'_{i,j}$ be the intersection of ${\cal S}_{i,j|i',j'}$ with the union of the boundaries of the strips
from $B,$ i.e., $N'_{i,j}=N'(T)\cap {\cal S}_{i,j|i',j'}.$ Hence, by Lemma \ref{generating_set_oriented} it remains, for each staircase ${\cal S}_{i,j|i',j'},$ to complete $N'_{i,j}$ to a local bidirected network $N''_{i,j,}$ such that any pair $({\bf t}_{j'},{\bf t}_l),({\bf t}_{l},{\bf t}_{j'})$ with ${\bf t}_{l}\in T_{i,j}$ can be connected in $N''_{i,j} \cup N'(T)$ by a directed Manhattan path. This must be done in such a way that the length of the network $N_{i,j}=N''_{i,j}\setminus N'_{i,j}$ is as small as possible. Let $N^*_{i,j}$ be an optimal completion of $N'_{i,j}.$

Suppose that $R_{i,i'}$ and $R_{j,j'}$ are the vertical and the horizontal bases
of  ${\cal S}_{i,j|i',j'}$ (see Fig.~\ref{staircase}). Let $R'_{i,i'}=R_{i,i'}\cap {\cal S}_{i,j|i',j'}$ and $R'_{j,j'}=R_{j,j'}\cap {\cal S}_{i,j|i',j'}.$  Suppose
that in algorithm's
and optimal solutions, these strips are oriented clockwise. Hence the leftmost
vertical side $s_{i'}$ of $R'_{i,i'}$ is upward, the  opposite side $s_i$ is
downward, the upper horizontal side $s_j$ of  $R'_{j,j'}$ is to the right, and
its opposite side $s_{j'}$ is to the left. For a terminal ${\bf t}\in T_{i,j},$
denote by $\overrightarrow{\bf t}$ and $\overleftarrow{\bf t}$ the tail and the
head of the directed edges of $N'_{i,j}$ incident to ${\bf t}$.
Set $\overrightarrow{T}_{i,j}=\{ \overrightarrow{\bf t}: {\bf t}\in T_{i,j}\}$
and $\overleftarrow{T}_{i,j}=\{ \overleftarrow{\bf t}: {\bf t}\in T_{i,j}\}$
(they are all vertices of the grid $\Gamma (T)$).

Any optimal completion $N^*_{i,j}$ of $N'_{i,j}$ can be decomposed into two
edge-disjoint subnetworks $\overrightarrow{N^*}_{i,j}$ and $\overleftarrow{N^*}_{i,j},$ such that $\overrightarrow{N^*}_{i,j}$ contains the edges on the directed Manhattan paths running from $s_{i'}\cup s_j$ to the points of $\overrightarrow{T}_{i,j}$ and  $\overleftarrow{N^*}_{i,j}$ contains the edges on the directed Manhattan paths running from the points of $\overleftarrow{T}_{i,j}$ to $s_i\cup s_{j'}.$ The length of $\overrightarrow{N^*}_{i,j}$ cannot be smaller than the length of an optimal (non-oriented) network $A_{i,j}$ in $\Gamma(T)$ connecting the points of $\overrightarrow{T}_{i,j}$ to $s_{i'}\cup s_j$ by Manhattan paths. Analogously, the length of $\overleftarrow{N^*}_{i,j}$ cannot be smaller than the length of an optimal (non-oriented) network $B_{i,j}$ in $\Gamma(T)$ connecting the points of $\overleftarrow{T}_{i,j}$ to $s_{i}\cup s_{j'}.$ At the difference of $\overrightarrow{N^*}_{i,j}$ and $\overleftarrow{N^*}_{i,j},$ the networks $A_{i,j}$ and $B_{i,j}$ are not edge-disjoint.
However, we can compute optimal $A_{i,j}$ and $B_{i,j}$ in polynomial time using dynamic programming because each of the sets of terminals $\overrightarrow{T}_{i,j}$ and $\overleftarrow{T}_{i,j}$ also constitute a staircase and to compute an optimal solution we will have to solve only a polynomial number of subproblems (this problem is similar to Steiner arborescence or to MMN problems for terminals on a staircase).

The algorithm computes by dynamic programming an optimal network $A_{i,j}$  for connecting $\overrightarrow{T}_{i,j}$ to $s_{i'}\cup s_j$ and an optimal network $B_{i,j}$
for connecting $\overleftarrow{T}_{i,j}$ to $s_{i}\cup s_{j'}.$  The dynamic programming constructs $A_{i,j}$ recursively in the following way: there exists a point
$\overrightarrow{\bf t}\in \overrightarrow{T}_{i,j}$ which is either connected in $A_{i,j}$  to $s_{i'}$ by a horizontal segment $s'$ or to $s_j$ by a vertical segment $s''$, say the first.
Then the problem is subdivided into two smaller subproblems, one for the points of $\overrightarrow{T}_{i,j}$ located strictly above $s'$  and the union $s_{i'}\cup s'$ and another for $s_{i'}\cup s_j$ and the
points of $\overrightarrow{T}_{i,j}$ located strictly below $s'$ (the case when $\overrightarrow{\bf t}$ is connected vertically is analogous). The construction of $B_{i,j}$ is similar (see the first two networks in Fig. \ref{AB_ij} for an illustration of $A_{i,j}$ and $B_{i,j}$).


Finally,  the algorithm ``rounds'' each of the networks $A_{i,j}$ and $B_{i,j}$  in order to produce directed networks
$\overrightarrow{A}_{i,j}$ and $\overleftarrow{B}_{i,j}$ having lengths at most twice the lengths of  $A_{i,j}$ and $B_{i,j},$ respectively (see the last two networks from Fig. \ref{AB_ij}).
The algorithm returns $N_{i,j}=\overrightarrow{A}_{i,j}\cup \overleftarrow{B}_{i,j}$ as a local completion of $N'_{i,j}.$ The networks
$\overrightarrow{A}_{i,j}$ and $\overleftarrow{B}_{i,j}$ are constructed in the following way. Let $v_0,v_1,\ldots,v_{k-1}$ and $h_1,h_2,\ldots,h_{l}$ be the consecutive horizontal and vertical lines of the grid $\Gamma(T)$ intersecting the staircase ${\cal S}_{i,j|i',j'}$ and numbered in such a way that $s_{i'}\subset v_0, s_i\subset v_1$ and $s_{j'}\subset h_1, s_j\subset h_2$. (If the bases are oriented counterclockwise, then we consider the same lines but we index them $v_1,v_2,\ldots,v_{k}$ and $h_0,h_1,\ldots,h_{l-1}$.)
Let $\Gamma^{even}_{i,j}$ (respectively, $\Gamma^{odd}_{i,j}$) be the
subgrid of $\Gamma(T)\cap {\cal S}_{i,j|i',j'}$ induced by vertical
and horizontal lines with even indices (respectively, with odd indices).
(Notice that if we orient the horizontal edges of $\Gamma^{even}_{i,j}$ to the right and the vertical edges upward,
and the horizontal edges of $\Gamma^{odd}_{i,j}$ to the left and the vertical edges downward, then we obtain a network which is well-known in the literature  as a {\it Manhattan Street Network} (see Fig. \ref{staircase_ori}) \cite{ChAg, Ma, Va}.)


\begin{figure}[h]\label{AB_ij}
\centering
\subfigure[$A_{i,j}$]{
\includegraphics[width=6cm]{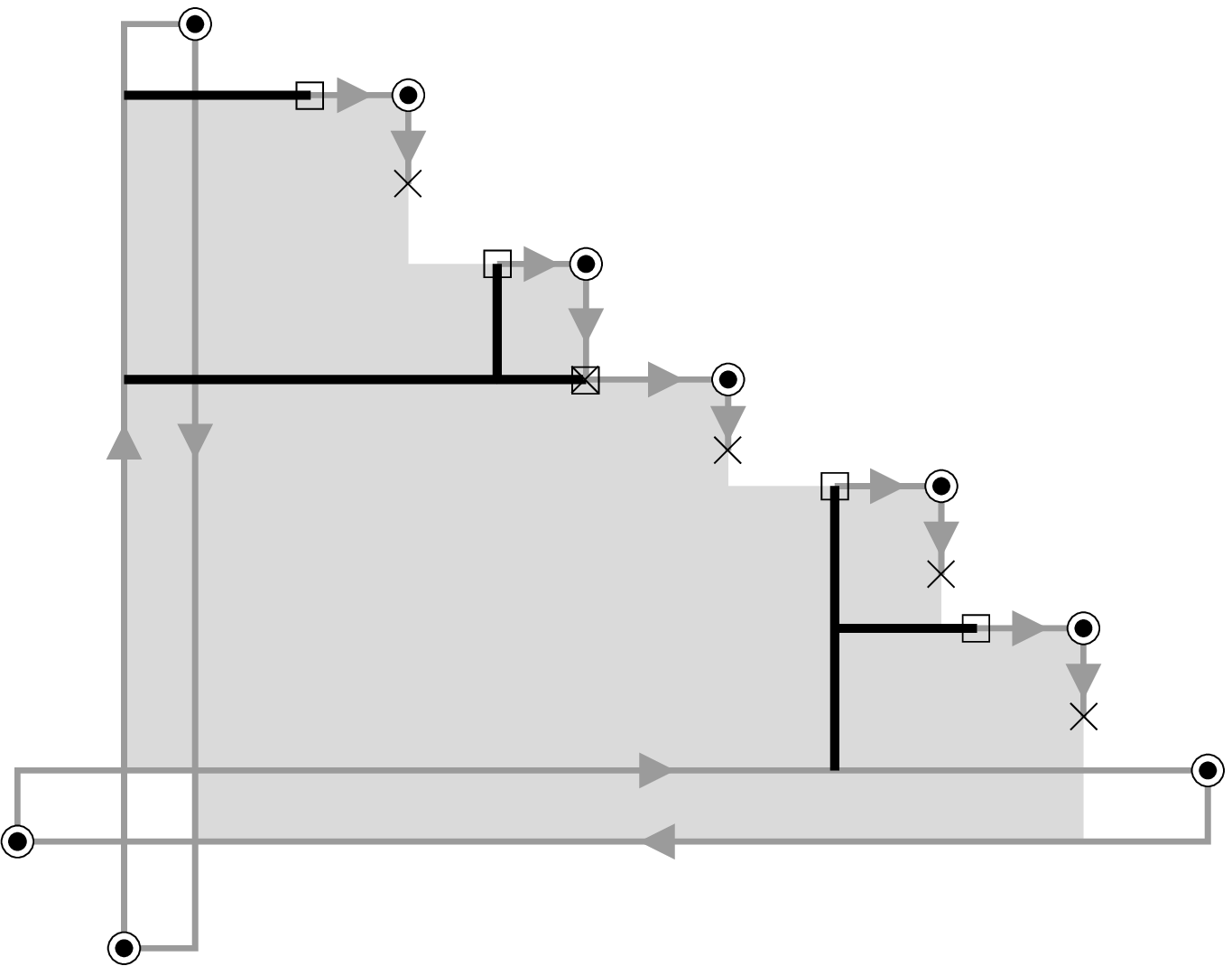}
\label{subfig1}
}\hspace{2cm}
\subfigure[$B_{i,j}$]{
\includegraphics[width=6cm]{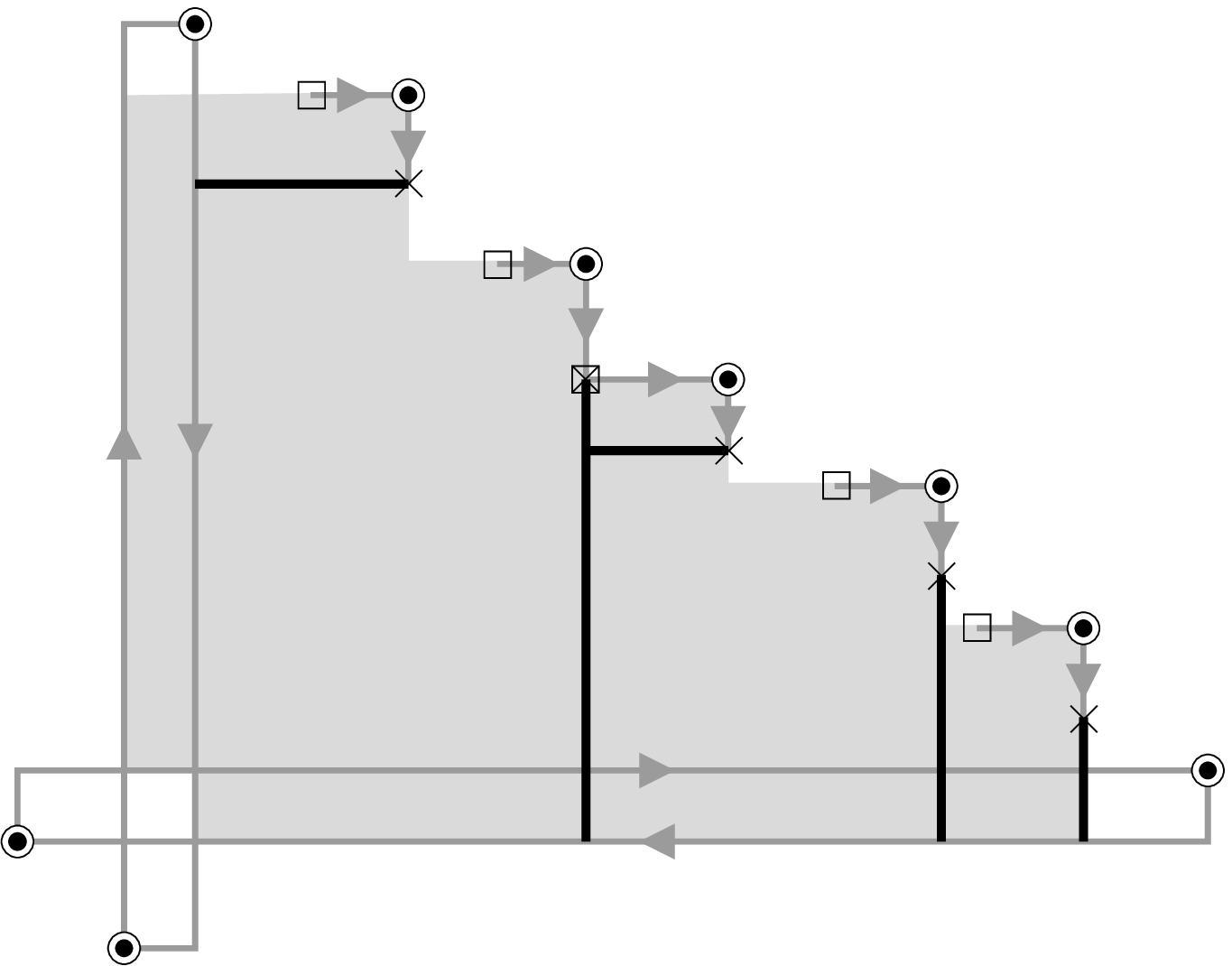}
\label{subfig2}
}
\subfigure[$\protect\overrightarrow{A}_{i,j}$]{
\includegraphics[width=6cm]{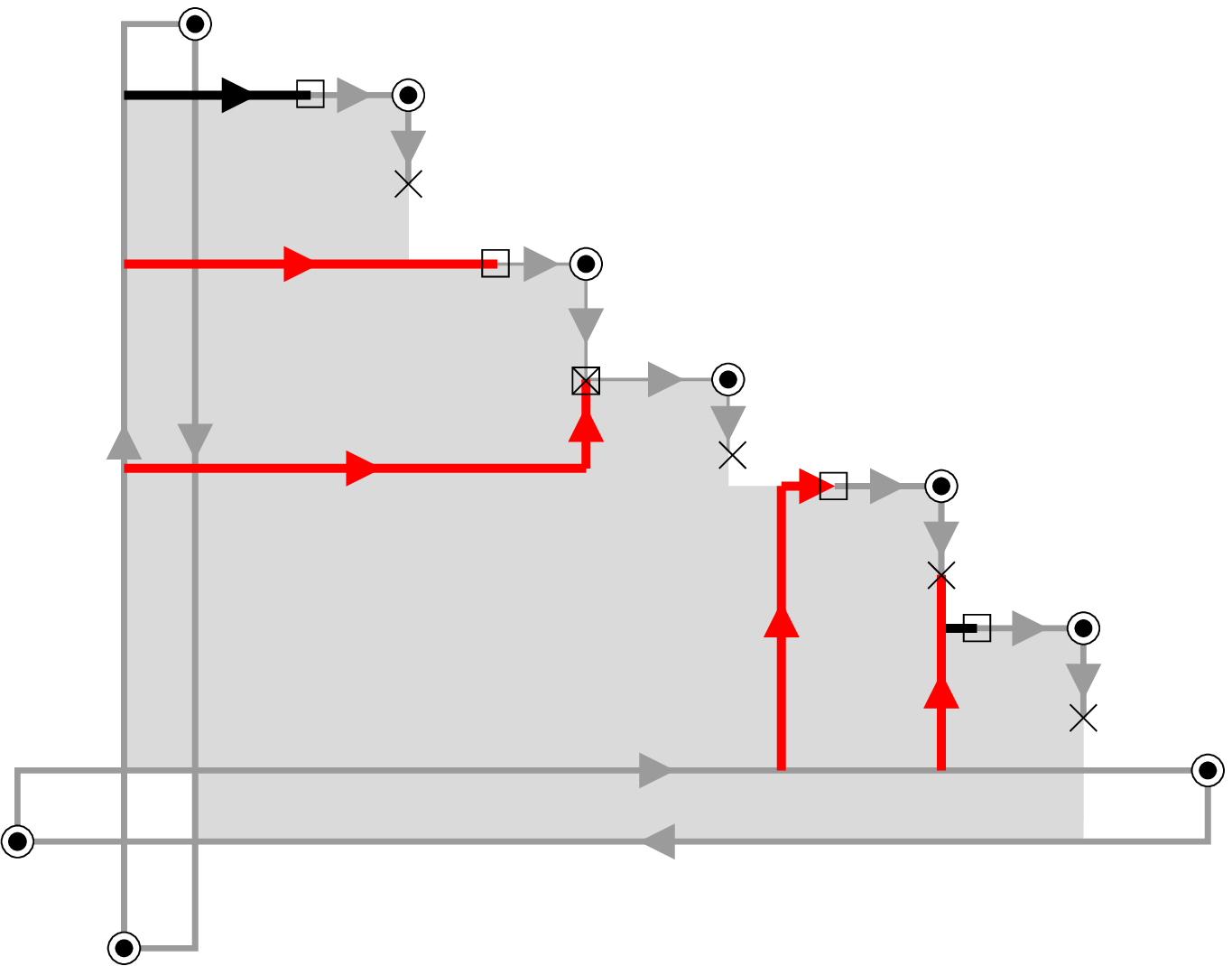}
\label{subfig3}
}\hspace{2cm}
\subfigure[$\protect\overleftarrow{B}_{i,j}$]{
\includegraphics[width=6cm]{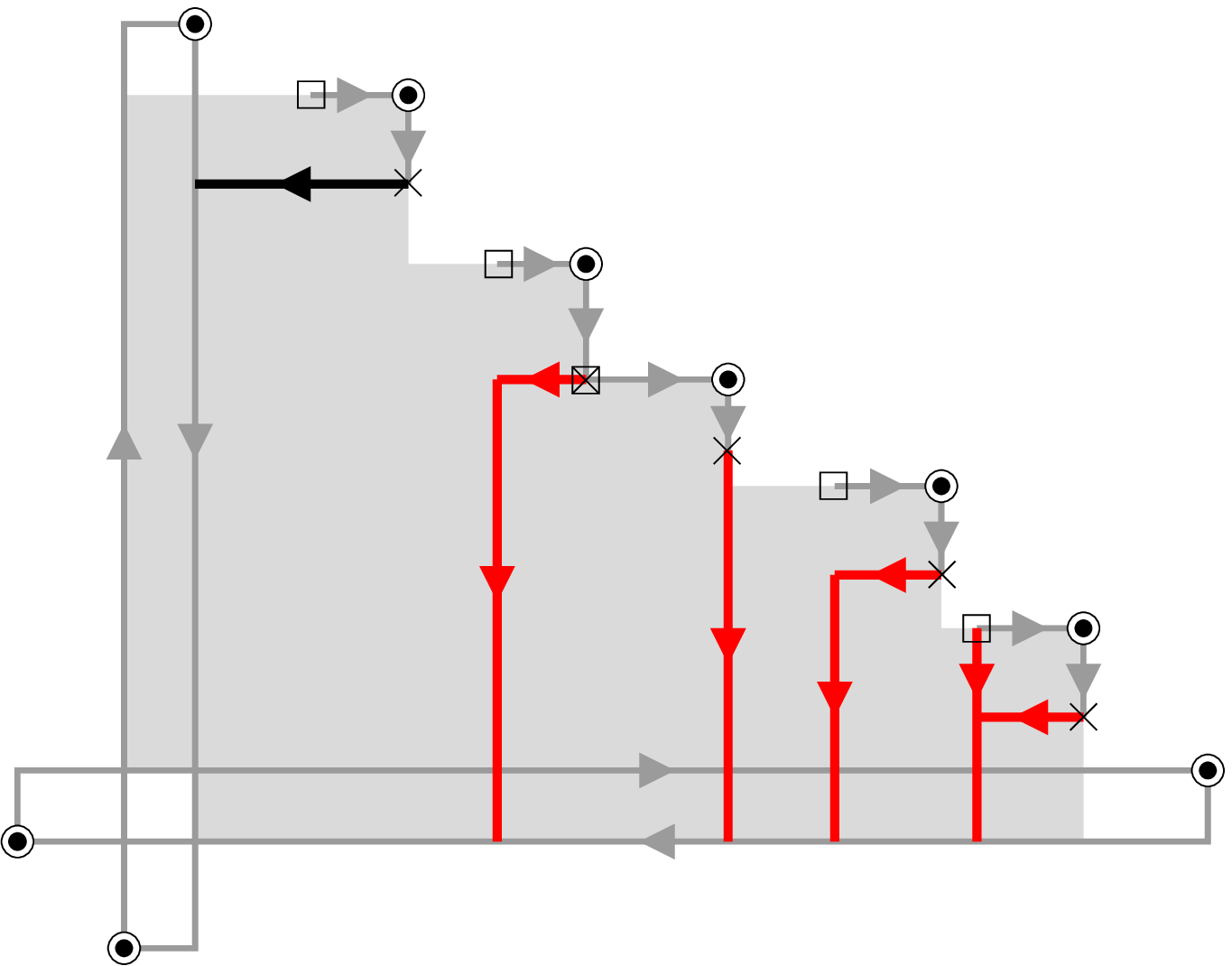}
\label{subfig4}
}
\label{networks}
\caption{The networks $A_{i,j}, B_{i,j}$ and the rounded directed networks $\protect\overrightarrow{A}_{i,j},\protect\overleftarrow{B}_{i,j}$ }
\end{figure}

\begin{figure}[h]
\centering \includegraphics[width=6cm]{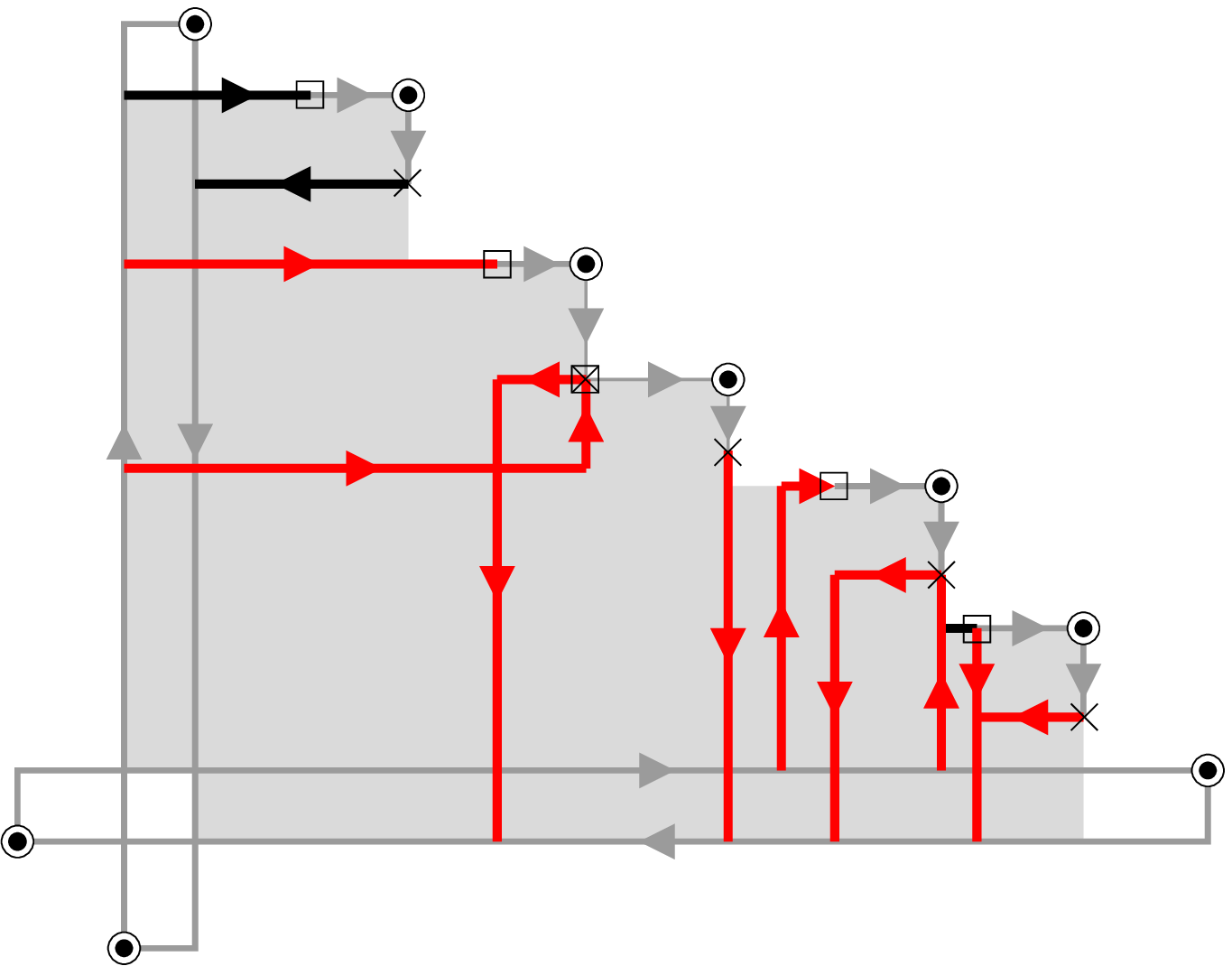}
\caption{Network $\protect\overrightarrow{A}_{i,j}\cup \protect\overleftarrow{B}_{i,j}$}
\end{figure}

The algorithm recursively derives $\overrightarrow{A}_{i,j}$ from $A_{i,j}$ and
$\overleftarrow{B}_{i,j}$ from $B_{i,j}.$ We show how to construct
$\overrightarrow{A}_{i,j}$, but each step of the algorithm is performed for both
 $\overrightarrow{A}_{i,j}$ and $\overleftarrow{B}_{i,j}$ (before going to the
next step). First, in Step 1, we insert in $\overrightarrow{A}_{i,j}$ the edges
of $\Gamma^{even}_{i,j}$ which have their
support in $A_{i,j}$ and orient them upward or to the right. The remaining directed edges are added in order in which the segments of $A_{i,j}$  have been added by the dynamic programming algorithm.
For a current  $\overrightarrow{{\bf t}},$ let $s$ be the vertical or horizontal segment of $A_{i,j}$ connecting  $\overrightarrow{{\bf t}}$
to the previously defined part of $\overrightarrow{A}_{i,j}.$ If $s$ belongs to a horizontal or vertical line with odd index $m,$ say $s$ belongs to $h_m,$ then in Step 2 we include
in $\overrightarrow{A}_{i,j}$ the segments $s',s''$ oriented to the right which
correspond to paths of $\Gamma^{even}_{i,j}$ obtained by
intersecting $h_{m-1}$ and $h_{m+1}$ with the vertical
lines passing via the ends of $s.$ If $s_j\subset h_{m-1}$ (i.e., $m=3$), then
$s'$ is not added. Additionally, we remove from $\overrightarrow{A}_{i,j}$ all
vertical edges whose lowest end-vertex is comprised between $s'$ and $s''$ (by
the construction, this pruning operation removes the edges that are no longer
used by directed Manhattan paths in $\overrightarrow{A}_{i,j}$).  In Step 3, we
proceed the points of $\overrightarrow{T}_{i,j}\cap \Gamma^{odd}_{i,j}$ in
the same order as in Step 2. Let $c$ be the vertical segment connecting
$\overrightarrow{{\bf t}}$ to $h_{m-1}.$ If  $c$ does not belong to
$\overleftarrow{B}_{i,j}$ after Step 2 (Fig. \ref{step2_3}, Step 3 (a)), we add $c$ oriented upward to
$\overrightarrow{A}_{i,j}.$ Otherwise, we consider the cell of $\Gamma(T)$ whose
boundary contains $c$ and a subsegment $b'$ of $s'$ (Fig. \ref{step2_3}, Step 3 (b)), we remove $b'$ and add to $\overrightarrow{A}_{i,j}$ the alternative path
around this cell consisting of an upward twin $c'$ of $c$ and a twin
$b$ of $b'$ oriented to the right ($b$ is a subsegment of $s$).
We call such a path a {\it replacement path}.
Let $N''_{i,j}:=N'_{i,j}\cup \overrightarrow{A}_{i,j}\cup \overleftarrow{B}_{i,j}$ and let $l(N''_{i,j})$ be its length.

\begin{figure}[h]
 \centering \includegraphics[scale=0.8]{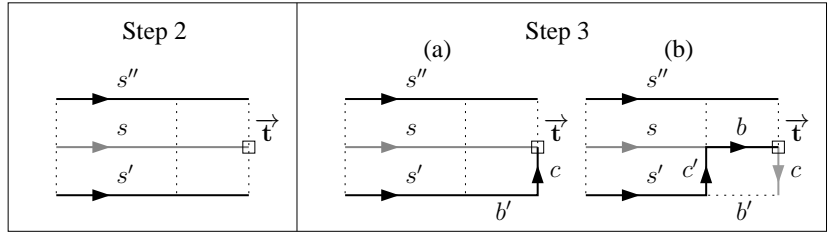}
 \caption{Steps 2 and 3 of Algorithm 1}
\label{step2_3}
\end{figure}

We conclude this section with the pseudocode of the algorithm constructing the directed networks
$\protect\overrightarrow{A}_{i,j}$ and $\protect\overleftarrow{B}_{i,j}$ from  $A_{i,j}$ and $B_{i,j}$:
\begin{algorithm}[h!]
\begin{algorithmic}
\caption{{\sf Construction of $\protect\overrightarrow{A}_{i,j} \cup \protect\overleftarrow{B}_{i,j}$}}
\REQUIRE{Networks $A_{i,j}$ and $B_{i,j}.$}
\STATE {\bf Step 1:} Insert in $\overrightarrow{A}_{i,j}$ the edges
of $\Gamma^{even}_{i,j}$ which have their
support in $A_{i,j}$ and orient them upward or to the right.
\STATE {\bf Step 2:} For each remaining directed edge (in order in which the segments of $A_{i,j}$  have been added by the dynamic programming algorithm):
\STATE \hspace{0.5cm}  If $s_j\not\subset h_{m-1}$ (i.e., $m \neq 3$), then include in $\overrightarrow{A}_{i,j}$ the segment $s'$ oriented to the right.
\STATE \hspace{0.5cm}  Include in $\overrightarrow{A}_{i,j}$ the segment $s''$ oriented to the right.
\STATE Remove from $\overrightarrow{A}_{i,j}$ all
vertical edges whose lowest end-vertex is comprised between $s'$ and $s''.$
\STATE Perform the Steps 1-2 for computing $\protect\overleftarrow{B}_{i,j}$ from $B_{i,j}.$
\STATE {\bf Step 3:} For each point of $\overrightarrow{T}_{i,j}\cap \Gamma^{odd}_{i,j}$ (proceeded in
the same order as in Step 2):
\STATE \hspace{0.5cm}  If $c$ does not belong to $\overleftarrow{B}_{i,j},$ then add to
$\overrightarrow{A}_{i,j}$ the segment $c$ oriented upward.
\STATE \hspace{0.5cm}   Otherwise, remove $b'$ and add to $\overrightarrow{A}_{i,j}$ the alternative path
consisting of $c'$ oriented upward and $b$ oriented to the right (see Fig. \ref{step2_3}).
\STATE 
Perform the Step 3 for each point of $\overleftarrow{T}_{i,j}\cap \Gamma^{even}_{i,j}.$
\end{algorithmic}
\end{algorithm}

\section{The analysis of the algorithm}

Now, we will show that the algorithm described in previous section returns a bidirected Manhattan network and that the length of this network
is at most twice the length of an optimal bidirected Manhattan network.

\begin{lemma} \label{directed_grids}  The supports of $\overrightarrow{A}_{i,j}$ and $\overleftarrow{B}_{i,j}$ are disjoint.
Moreover, in $\overrightarrow{A}_{i,j}$ there exists a directed Manhattan path from $s_{i'}\cup s_j$ to each point of $\overrightarrow{T}_{i,j}$ and in $\overleftarrow{B}_{i,j}$ there exists a directed Manhattan path from
each point of $\overleftarrow{T}_{i,j}$ to $s_i\cup s_{j'}.$ 
\end{lemma}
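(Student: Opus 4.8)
The plan is to prove the two reachability statements first, by induction following the recursion of the dynamic program, and then to prove that the supports are disjoint, which is where the real work lies. By the symmetry of the two constructions it suffices to argue for $\overrightarrow{A}_{i,j}$: the network $\overleftarrow{B}_{i,j}$ is produced from $B_{i,j}$ by the mirror procedure in which even and odd indices are exchanged, ``upward'' is exchanged with ``downward'', ``to the right'' with ``to the left'', and $s_{i'}\cup s_j$ is replaced by $s_i\cup s_{j'}$. The key structural remark is that after Steps~1 and~2 the network $\overrightarrow{A}_{i,j}$ is a subnetwork of $\Gamma^{even}_{i,j}$ (Step~1 inserts only edges of $\Gamma^{even}_{i,j}$, Step~2 inserts the segments $s',s''$, which are paths of $\Gamma^{even}_{i,j}$, and the pruning only deletes edges), while $\overleftarrow{B}_{i,j}$ is at that stage a subnetwork of $\Gamma^{odd}_{i,j}$; since no line of $\Gamma(T)\cap{\cal S}_{i,j|i',j'}$ carries both an even and an odd index, $\Gamma^{even}_{i,j}$ and $\Gamma^{odd}_{i,j}$ share no edge, so after Steps~1--2 the two supports are already disjoint. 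In Step~3 the only edges $\overrightarrow{A}_{i,j}$ acquires outside $\Gamma^{even}_{i,j}$ are a unit segment $c$ on an odd vertical line and, in the replacement case, its twin $c'$ (a unit segment on an even vertical line) and the twin $b$ (a subsegment of $s\subset h_m$, hence a unit segment on the odd horizontal line $h_m$), while the edge $b'$ of $\Gamma^{even}_{i,j}$ is deleted.

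For the reachability statements I would induct on the recursion tree of the dynamic program that built $A_{i,j}$, processing the points of $\overrightarrow{T}_{i,j}$ in the order used in Steps~2--3, with the invariant: in the current $\overrightarrow{A}_{i,j}$ there is, for every already-processed point of $\overrightarrow{T}_{i,j}$, a directed path from $s_{i'}\cup s_j$ to it all of whose arcs point upward or to the right, and the segments $s',s''$ just introduced are oriented to the right and reachable from $s_{i'}\cup s_j$, so that they may serve as base in the two sub-recursions. A directed path between two comparable grid points all of whose arcs point upward or to the right is automatically a shortest $l_1$-path, so the invariant yields exactly the required directed Manhattan paths, and since $s_{i'}\subset v_0$ and $s_j\subset h_2$ lie in $\Gamma^{even}_{i,j}$ and carry the orientation of $\overrightarrow{A}_{i,j}$, such paths do start on $s_{i'}\cup s_j$. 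The verifications are local: the shifted copies $s',s''$ prolong, to the right, a monotone path already present on the even subgrid, and $\overrightarrow{{\bf t}}$ is then reached from $s'$ either along the upward segment $c$ or, in the replacement case, along the monotone two-arc path $c'$ (upward) then $b$ (to the right). The one subtlety is that the later pruning must not destroy an already-built path, which is precisely the property the construction claims for the deleted vertical edges. The third assertion of the lemma then follows by the mirror argument applied to $\overleftarrow{B}_{i,j}$.

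It remains to prove that the final supports are disjoint. By the structural remark it is enough to check that the Step-3 additions of $\overrightarrow{A}_{i,j}$ avoid $\overleftarrow{B}_{i,j}$ and, symmetrically, that those of $\overleftarrow{B}_{i,j}$ avoid $\overrightarrow{A}_{i,j}$ (Step~3 for $\overleftarrow{B}_{i,j}$ being performed only after $\overrightarrow{A}_{i,j}$ is complete). A segment $c$ inserted directly into $\overrightarrow{A}_{i,j}$ is inserted only after the explicit test $c\notin\overleftarrow{B}_{i,j}$, and the symmetric test performed when $\overleftarrow{B}_{i,j}$ is completed keeps it disjoint from $\overleftarrow{B}_{i,j}$; the directly-inserted $c$-segments of $\overleftarrow{B}_{i,j}$ are treated the same way. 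A replacement twin $c'$ lies on an even vertical line, so it meets no edge of $\overleftarrow{B}_{i,j}\subseteq\Gamma^{odd}_{i,j}$; it cannot coincide with a $c$-segment of $\overleftarrow{B}_{i,j}$ since those have an endpoint on an even horizontal line while $c'$ joins $h_{m-1}$ to $h_m$ with $m$ odd; and it cannot be re-created by a later replacement of $\overleftarrow{B}_{i,j}$, which tests against the already-present $\overrightarrow{A}_{i,j}$. The delicate case, which I expect to be the main obstacle, is the horizontal twin $b$ on the odd line $h_m$: it can clash only with the edge $f$ of $\Gamma^{odd}_{i,j}$ lying on $h_m$ and meeting $\overrightarrow{{\bf t}}$ from the left. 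Here I would use that the condition triggering the replacement — namely that the edge $e$ of $\Gamma^{odd}_{i,j}$ on the odd vertical line through $\overrightarrow{{\bf t}}$ and meeting $\overrightarrow{{\bf t}}$ from below lies in $\overleftarrow{B}_{i,j}$, where it is directed away from $\overrightarrow{{\bf t}}$ — is, by a local analysis of the optimal network $B_{i,j}$ underlying $\overleftarrow{B}_{i,j}$, incompatible with $f\in\overleftarrow{B}_{i,j}$: both would make $\overrightarrow{{\bf t}}$ a point from which two distinct Manhattan paths of $B_{i,j}$ depart toward $s_i\cup s_{j'}$ (one along $e$, one along $f$), and a short exchange argument — removing the outgoing branch not needed by any terminal other than $\overrightarrow{{\bf t}}$ itself, which requires only one path to $s_i\cup s_{j'}$ — then contradicts optimality. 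Hence $f\notin\overleftarrow{B}_{i,j}$, so $b\notin\overleftarrow{B}_{i,j}$, and the mirror argument handles the replacement twins of $\overleftarrow{B}_{i,j}$. This gives the disjointness of supports, and with it the lemma; the whole difficulty is concentrated in this exchange argument ruling out a ``doubly outgoing step'' of $B_{i,j}$ at $\overrightarrow{{\bf t}}$.
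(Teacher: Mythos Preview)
Your overall plan---reachability via induction on the dynamic-programming recursion, then disjointness via the even/odd partition plus a local analysis of the Step~3 additions---mirrors the paper's proof closely. The treatment of the directly-inserted segment $c$ and of the replacement segment $c'$ is essentially the same as in the paper.

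The gap is in your handling of the horizontal twin $b$. You propose to rule out $b\in\overleftarrow{B}_{i,j}$ by an exchange argument on the \emph{optimal} undirected network $B_{i,j}$: if both $e$ and $f$ were present, $\overrightarrow{{\bf t}}$ would have two outgoing Manhattan paths, and one could be pruned. This fails for two reasons. First, $\overrightarrow{{\bf t}}\in\overrightarrow{T}_{i,j}$ is \emph{not} a terminal of $B_{i,j}$ (whose terminal set is $\overleftarrow{T}_{i,j}$), so there is no ``only one path needed at $\overrightarrow{{\bf t}}$'' constraint to exploit; a non-terminal Steiner point of an optimal $B_{i,j}$ can perfectly well have both a downward and a leftward branch, each serving distinct terminals above-right of it. Second, after Step~2 the network $\overleftarrow{B}_{i,j}$ is a \emph{rounded} version of $B_{i,j}$, with segments shifted from even lines to neighbouring odd ones; the presence of $e$ or $f$ in $\overleftarrow{B}_{i,j}$ therefore says nothing direct about the local shape of $B_{i,j}$ at $\overrightarrow{{\bf t}}$, so the bridge back to optimality of $B_{i,j}$ is missing.

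The paper closes this case by a purely constructive argument: the pruning performed in Step~2 (for $\overleftarrow{B}_{i,j}$) guarantees that each $\overrightarrow{{\bf t}}$ is incident to at most one outgoing edge of $\overleftarrow{B}_{i,j}$; since the replacement was triggered precisely because $c\in\overleftarrow{B}_{i,j}$ survived that pruning, $c$ is that unique outgoing edge, and hence $b\notin\overleftarrow{B}_{i,j}$. No optimality of $B_{i,j}$ is invoked---the out-degree bound comes from the algorithm itself. Replace your exchange argument by this pruning-based one and the rest of your proof goes through.
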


\begin{proof}
By the algorithm, after Step 2 the supports of $\overrightarrow{A}_{i,j}$ and $\overleftarrow{B}_{i,j}$ are disjoint, however these networks are not yet feasible. Step 3 ensures feasibility of $\overrightarrow{A}_{i,j}$ (and $\overleftarrow{B}_{i,j}$) by
connecting each terminal $\overrightarrow{\bf{t}} \in \overrightarrow{T}_{i,j}$ to the network $\overrightarrow{A}_{i,j}$
computed in Step~2, using either a vertical segment $c$ (Fig. \ref{step2_3} Step~3 (a)) or a replacement path $\{b, c'\}$ (Fig. \ref{step2_3} Step~3 (b)). We will prove now that after Step~3
the networks $\overrightarrow{A}_{i,j}$ and $\overleftarrow{B}_{i,j}$ are disjoint and feasible.

By construction, $A_{i,j}$ connects each terminal of $\overrightarrow{T}_{i,j}$
to $s_{i'}\cup s_j$ by a Manhattan path. Therefore, it suffices to show that
this property is preserved each time when we orient a new part of the network
$A_{i,j}.$ This is obviously true when we orient all edges of
$\Gamma^{even}_{i,j}$ which have their support in $A_{i,j}.$ Now, suppose that a
segment $s$  added in $A_{i,j}$ to connect a terminal
$\overrightarrow{{\bf t}}$ is replaced by two directed segments $s'$ and $s''.$
Then all vertices of $\overrightarrow{T}_{i,j}$ connected in $A_{i,j}$ via $s$  (they are all located above $s$) will be now connected by directed Manhattan paths going via $s''$.
On the other hand, $\overrightarrow{{\bf t}}$ is connected in $\overrightarrow{A}_{i,j}$ via $c$ and $s'$ if $c$ does not belong to $\overleftarrow{B}_{i,j}$ and  via the replacement path $\{b,c'\}$ and $s'$ otherwise. We assert that in the last case, $c'$ and $b$
will not be used by $\overleftarrow{B}_{i,j}.$ Indeed, due to the pruning operation in Step 2, each $\overrightarrow{{\bf t}}$
is incident to at most one outgoing edge of $\overleftarrow{B}_{i,j}.$ Now, since we used a replacement path,  the segment
$c$ of $\Gamma^{odd}_{i,j}$ is still in $\overleftarrow{B}_{i,j}$ after the
pruning operation in Step 2.
This shows that $b$ cannot belong to $\overleftarrow{B_{i,j}}$. On the other
hand, the end-points of $c'$ do not belong to
$\overleftarrow{T}_{i,j},$ otherwise two distinct points of
$\overrightarrow{T}_{i,j} \cup \overleftarrow{T}_{i,j}$ will lie on the same vertical or horizontal line, which is impossible.
Now, since $c'$ belongs to $\Gamma^{even}_{i,j}$, by the algorithm, $c'$ can be
involved in a replacement path of $\overleftarrow{B}_{i,j}$ only if it is
incident to a point of $\overleftarrow{T}_{i,j}.$ Therefore $c'$ does not belong
to $\overleftarrow{B}_{i,j}$ either.
$\Box$
\end{proof}

\begin{lemma} \label{local}  $l(N_{i,j})=l(\overrightarrow{A}_{i,j} \cup
\overleftarrow{B}_{i,j})\le \textit{2 }l(N^*_{i,j})+l(N'_{i,j}),$ where
$N^*_{i,j}$ is an optimal completion  of $N'_{i,j}.$
\end{lemma}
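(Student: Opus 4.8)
The plan is to reduce the lemma to a single inequality comparing the lengths produced by the rounding with $l(A_{i,j})$, $l(B_{i,j})$ and $l(N'_{i,j})$, and then to establish that inequality by walking through the three steps of Algorithm~1. To begin, by Lemma~\ref{directed_grids} the supports of $\overrightarrow{A}_{i,j}$ and $\overleftarrow{B}_{i,j}$ are disjoint, so $l(N_{i,j})=l(\overrightarrow{A}_{i,j})+l(\overleftarrow{B}_{i,j})$. Moreover, the optimal completion $N^*_{i,j}$ decomposes into the edge-disjoint subnetworks $\overrightarrow{N^*}_{i,j}$ and $\overleftarrow{N^*}_{i,j}$; forgetting orientations, $\overrightarrow{N^*}_{i,j}$ is a network connecting $\overrightarrow{T}_{i,j}$ to $s_{i'}\cup s_j$ and $\overleftarrow{N^*}_{i,j}$ one connecting $\overleftarrow{T}_{i,j}$ to $s_i\cup s_{j'}$ by Manhattan paths, so by optimality of $A_{i,j}$ and $B_{i,j}$ and edge-disjointness, $l(N^*_{i,j})=l(\overrightarrow{N^*}_{i,j})+l(\overleftarrow{N^*}_{i,j})\ge l(A_{i,j})+l(B_{i,j})$. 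Hence it suffices to prove
$$l(\overrightarrow{A}_{i,j})+l(\overleftarrow{B}_{i,j})\ \le\ 2\,l(A_{i,j})+2\,l(B_{i,j})+l(N'_{i,j}).$$

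For this I would estimate $l(\overrightarrow{A}_{i,j})$ step by step along Algorithm~1 (the analysis of $\overleftarrow{B}_{i,j}$ being symmetric, with the roles of $\Gamma^{even}_{i,j}$ and $\Gamma^{odd}_{i,j}$ interchanged). Write $A_{i,j}=A'\cup A''$, where $A'$ is the part of $A_{i,j}$ on grid lines of even index and $A''$ the part on lines of odd index. Step~1 inserts exactly $A'$. In Step~2, each odd-line segment $s$ added by the dynamic program is replaced by the segments $s'$ and $s''$ on the two neighbouring even lines, each of extent at most $l(s)$, hence contributing at most $2\,l(s)$ together; since these odd-line segments partition $A''$ and the pruning of vertical edges cannot lengthen the network, after Step~2 we get $l(\overrightarrow{A}_{i,j})\le l(A')+2\,l(A'')\le 2\,l(A_{i,j})$. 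In Step~3, for every $\overrightarrow{\bf t}\in\overrightarrow{T}_{i,j}\cap\Gamma^{odd}_{i,j}$ we either add the vertical connector $c=c_{\overrightarrow{\bf t}}$ joining $\overrightarrow{\bf t}$ to the adjacent even line, or (replacement-path case) we delete a subsegment $b'$ of $s'$ and add the twins $b$ and $c'$ with $l(b)=l(b')$ and $l(c')=l(c_{\overrightarrow{\bf t}})$; in both cases the net increase for $\overrightarrow{\bf t}$ equals $l(c_{\overrightarrow{\bf t}})$. Setting $\Sigma_A=\sum_{\overrightarrow{\bf t}\in\overrightarrow{T}_{i,j}\cap\Gamma^{odd}_{i,j}}l(c_{\overrightarrow{\bf t}})$ and $\Sigma_B=\sum_{\overleftarrow{\bf t}\in\overleftarrow{T}_{i,j}\cap\Gamma^{even}_{i,j}}l(c_{\overleftarrow{\bf t}})$, this yields
$$l(\overrightarrow{A}_{i,j})\le 2\,l(A_{i,j})+\Sigma_A,\qquad l(\overleftarrow{B}_{i,j})\le 2\,l(B_{i,j})+\Sigma_B.$$

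It remains to show $\Sigma_A+\Sigma_B\le l(N'_{i,j})$, and this is the one genuinely delicate point. Since $\overrightarrow{\bf t}$ and $\overleftarrow{\bf t}$ are obtained from ${\bf t}\in T_{i,j}$ by single grid steps in two perpendicular directions, they lie on grid lines of opposite parity, so each terminal ${\bf t}$ contributes a connector to at most one of the two sums; and that connector is a vertical segment between two consecutive horizontal grid lines adjacent to ${\bf t}$, so its length equals that of a vertical side of the horizontal strip through ${\bf t}$ — a side which, because no strip crosses the interior of a staircase and (by Lemma~\ref{one_block}) that strip lies in the block containing ${\cal S}_{i,j|i',j'}$, is contained in ${\cal S}_{i,j|i',j'}$ and is therefore an edge of $N'_{i,j}$. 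As distinct terminals of $T_{i,j}$ lie on distinct rows and columns, these sides are pairwise distinct edges of $N'_{i,j}$, so $\Sigma_A+\Sigma_B\le l(N'_{i,j})$. Substituting this and $l(A_{i,j})+l(B_{i,j})\le l(N^*_{i,j})$ into the displayed inequality gives $l(N_{i,j})\le 2\,l(N^*_{i,j})+l(N'_{i,j})$, as claimed. The main obstacle is precisely this charging argument — locating, for each contributing terminal, the correct side of its strip and verifying that it lies inside the staircase and is charged by no other terminal; by contrast, the step-by-step length accounting is routine once one observes that every pruning operation is length-nonincreasing and every replacement twin is length-preserving.
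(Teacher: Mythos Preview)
Your proposal is correct and takes essentially the same route as the paper: bound $l(\overrightarrow{A}_{i,j}\cup\overleftarrow{B}_{i,j})$ by $2(l(A_{i,j})+l(B_{i,j}))$ plus the total length $\Delta$ of the Step~3 connectors $c,c'$, show $\Delta\le l(N'_{i,j})$, and invoke the optimality of $A_{i,j},B_{i,j}$ against the edge-disjoint pieces $\overrightarrow{N^*}_{i,j},\overleftarrow{N^*}_{i,j}$ of $N^*_{i,j}$. The only difference is that the paper dispatches $\Delta\le l(N'_{i,j})$ in one line---observing that any horizontal or vertical line of $\Gamma(T)$ crosses at most one connector---whereas you spell out a more explicit charging of connectors to strip sides inside the staircase.
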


\begin{proof}
By construction, the length of
$\overrightarrow{A}_{i,j} \cup \overleftarrow{B}_{i,j}$
is at most $\textit{2 }l(A_{i,j}) + \textit{2 }l(B_{i,j})$ plus
the total length $\Delta$ of the
 edges $c$ or $c'$ orthogonal to $s'$
defined in the algorithm.
Since any horizontal or vertical line crosses at most
one such edge, $\Delta$ is at most $l(N'_{i,j}),$ hence
$l(N_{i,j})\le \textit{2 }(l(A_{i,j}) + l(B_{i,j})) + l(N'_{i,j}).$
By optimality of $A_{i,j}$ and $B_{i,j},$
$l(A_{i,j})+l(B_{i,j})\le
l(\overrightarrow{N^*}_{i,j})+l(\overleftarrow{N^*}_{i,j})=l(N^*_{i,j}).$
$\Box$
\end{proof}

Let $N(T)$ be the network obtained as the union of $N'(T)$ and all local completions $N_{i,j}=\overrightarrow{A}_{i,j} \cup \overleftarrow{B}_{i,j}$ taken over all staircases. Let $N^*(T)$ be an optimal solution of
the BDMMN problem having $N'(T)$ as a subnetwork. Then $N^*(T)$ can be viewed as the disjoint union of $N'(T)$ with the local completions $N^*_{i,j}=(N^*(T)\setminus N'(T))\cap  {\cal S}_{i,j|i',j'}$ of $N'_{i,j}.$
It was shown in \cite{ChNouVa} that the interiors of two
staircases are disjoint. Since in the BDMMN problem there are no two terminals on the same horizontal or vertical line, two staircases cannot intersect in an edge, thus the intersection of two staircases is empty or a subset of terminals. Hence the local completions $N^*_{i,j}$ are pairwise disjoint (as well as the local completions $N_{i,j}$). By their definition, the networks $N'_{i,j}$ are also pairwise disjoint. Using this disjointness property, Lemma \ref{local}, and summing up over all staircases, we obtain that

\begin{eqnarray*}l(N(T)) &=& l(N'(T))+\sum l(N_{i,j})\le l(N'(T))+\sum
(\textit{2 }l(N^*_{i,j})+l(N'_{i,j}))\\
&\le& \textit{2 }l(N'(T))+\textit{2 }l(N^*(T)\setminus
N'(T))=\textit{2 }l(N^*(T)),
\end{eqnarray*}

The time complexity of the algorithm for the BDMMN problem
is dominated by the execution of the dynamic programming
algorithm that computes $A_{i,j}$ and $B_{i,j}$ for every
staircase ${\cal S}_{i,j|i',j'}.$ A staircase ${\cal S}_{i,j|i',j'}$ contributes
$O(|T_{i,j}|^3)$ to the total complexity of the algorithm.
Since each terminal belongs to at most two staircases,
the total complexity of the algorithm for the BDMMN
problem is $O(n^3),$ establishing the main result of the paper:

\begin{theorem} \label{2BDMMN} The network $N(T)$ computed by the algorithm
in $O(n^3)$ time is a factor 2 approximation for the BDMMN problem.
\end{theorem}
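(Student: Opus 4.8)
The plan is to assemble the theorem from the structural lemmas already established, so that the proof amounts to a bookkeeping argument plus a complexity count. First I would verify correctness: one must show that the network $N(T)$, defined as the union of $N'(T)$ with all local completions $N_{i,j}=\overrightarrow{A}_{i,j}\cup\overleftarrow{B}_{i,j}$, is indeed a bidirected Manhattan network. By Lemma~\ref{generating_set_oriented}, $F=F'\cup F''$ is a generating set, so it suffices to produce directed Manhattan paths for all pairs in $F$. For pairs in $F'$ (strips), the oriented boundaries of the strips are contained in $N'(T)$ by construction. For pairs $({\bf t}_{j'},{\bf t}_l),({\bf t}_l,{\bf t}_{j'})$ in $F''$ with ${\bf t}_l\in T_{i,j}$, Lemma~\ref{directed_grids} guarantees that $\overrightarrow{A}_{i,j}$ contains a directed Manhattan path from $s_{i'}\cup s_j$ to $\overrightarrow{\bf t}_l$ and $\overleftarrow{B}_{i,j}$ contains a directed Manhattan path from $\overleftarrow{\bf t}_l$ to $s_i\cup s_{j'}$; concatenating these with the relevant directed portions of the strip boundaries in $N'(T)$ (which run along $s_{i'},s_j,s_i,s_{j'}$ and pass through ${\bf t}_{j'}$) yields directed Manhattan paths in both directions between ${\bf t}_{j'}$ and ${\bf t}_l$. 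Here I would rely on the fact that the bases of ${\cal S}_{i,j|i',j'}$ and the strips touching it all lie in one block (Lemma~\ref{one_block}) and are therefore compatibly oriented, which is what makes the concatenation produce a genuine directed Manhattan path rather than a path that doubles back.

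Next I would establish the performance guarantee. Fix an optimal solution $N^*(T)$ of the BDMMN problem; since every strip boundary belongs to every bidirected Manhattan network and, by Lemmas~\ref{orientation_strips} and~\ref{one_block}, all strips of a block are compatible and the algorithm can test the two orientations of each block independently, we may assume the algorithm oriented the strips exactly as in $N^*(T)$, so that $N'(T)\subseteq N^*(T)$. By the disjointness facts recalled just before the theorem statement — the interiors of distinct staircases are disjoint, no two staircases share an edge (no two terminals lie on a common grid line), hence the local completions $N^*_{i,j}=(N^*(T)\setminus N'(T))\cap{\cal S}_{i,j|i',j'}$ are pairwise disjoint, and the $N'_{i,j}$ are pairwise disjoint — we get $l(N^*(T))=l(N'(T))+\sum l(N^*_{i,j})$. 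Applying Lemma~\ref{local} staircase by staircase and summing gives the chain of inequalities already displayed in the excerpt:
\begin{eqnarray*}
l(N(T)) &=& l(N'(T))+\sum l(N_{i,j}) \le l(N'(T))+\sum\bigl(\textit{2 }l(N^*_{i,j})+l(N'_{i,j})\bigr)\\
&\le& \textit{2 }l(N'(T))+\textit{2 }l(N^*(T)\setminus N'(T)) = \textit{2 }l(N^*(T)),
\end{eqnarray*}
where the second inequality uses $\sum l(N'_{i,j})\le l(N'(T))$, valid because the $N'_{i,j}$ are disjoint subnetworks of $N'(T)$.

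Finally, the running time: the dominant cost is the dynamic program that computes $A_{i,j}$ and $B_{i,j}$ for each staircase. Since $\overrightarrow{T}_{i,j}$ and $\overleftarrow{T}_{i,j}$ each form a staircase configuration, the recursion for $A_{i,j}$ (and symmetrically $B_{i,j}$) branches on which terminal connects directly to a base side and splits into two subproblems on sub-staircases; there are $O(|T_{i,j}|^2)$ distinct subproblems and each is resolved in $O(|T_{i,j}|)$ time, giving $O(|T_{i,j}|^3)$ per staircase. Because each terminal lies in at most two staircases, $\sum|T_{i,j}|=O(n)$, so $\sum|T_{i,j}|^3=O(n^3)$; the strip-staircase decomposition itself and the rounding Steps~1--3 cost $O(n\log n)$, which is absorbed. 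Hence the total running time is $O(n^3)$, completing the proof. The main obstacle I anticipate is not any single hard estimate but making the correctness concatenation precise — checking that, on each staircase, the directed portions of $N'(T)$ along the four base sides, the replacement-path modifications in Steps~2--3, and the pruning of vertical edges together still deliver directed Manhattan paths in \emph{both} directions for every pair of $F''$; but this is exactly the content of Lemma~\ref{directed_grids}, so I would simply invoke it rather than re-deriving it.
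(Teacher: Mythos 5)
Your proposal is correct and follows essentially the same route as the paper: it assembles the theorem from Lemmas~\ref{generating_set_oriented}, \ref{orientation_strips}, \ref{one_block}, \ref{directed_grids}, and \ref{local}, reproduces the paper's disjointness argument and the identical chain of inequalities for the factor-2 bound, and gives the same $O(|T_{i,j}|^3)$-per-staircase complexity count. The only difference is that you spell out the path-concatenation step for pairs in $F''$ a bit more explicitly than the paper does, which is a harmless elaboration rather than a different approach.
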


\noindent
{\bf Open question:}  {\it Is the BDMMN problem polynomial or NP-hard?} BDMMN
reduces only to staircases, avoiding thus the difficulty occurring in the MMN
problem
due to the interaction between strips and staircases. However, in the directed version,
for each staircase we have to compute two disjoint but not necessarily optimal directed
networks $\overrightarrow{N}_{i,j}$ and $\overleftarrow{N}_{i,j}$ which together provide
an optimal completion of $N'_{i,j}.$

\end{document}